\documentclass[a4paper,UKenglish,cleveref, autoref, thm-restate]{lipics-v2021}

\listfiles

\bibliographystyle{plainurl}

\title{Multilevel Hypergraph Partitioning with Vertex Weights Revisited}

\titlerunning{Multilevel Hypergraph Partitioning with Vertex Weights Revisited}

\author{Tobias Heuer}{Karlsruhe Institute of Technology, Karlsruhe, Germany}{tobias.heuer@kit.edu}{}{}

\author{Nikolai Maas}{Karlsruhe Institute of Technology, Karlsruhe, Germany}{nikolai.maas@student.kit.edu}{}{}

\author{Sebastian Schlag}{Karlsruhe Institute of Technology, Karlsruhe, Germany}{research@sebastianschlag.de}{}{}

\authorrunning{T. Heuer, N. Maas and S. Schlag}

\Copyright{Tobias Heuer, Nikolai Maas and Sebastian Schlag}

\ccsdesc[500]{Mathematics of computing~Hypergraphs}
\ccsdesc[500]{Mathematics of computing~Graph algorithms}

\keywords{multilevel hypergraph partitioning, balanced partitioning, vertex weights}

\category{}

\relatedversion{}

\supplement{Source Code: \url{https://github.com/kahypar/kahypar} \\ Benchmark Set \& Experimental Results: \url{http://algo2.iti.kit.edu/heuer/sea21/}}

\nolinenumbers

\hideLIPIcs

\EventEditors{John Q. Open and Joan R. Access}
\EventNoEds{2}
\EventLongTitle{42nd Conference on Very Important Topics (CVIT 2016)}
\EventShortTitle{SEA 2021}
\EventAcronym{SEA}
\EventYear{2021}
\EventDate{December 24--27, 2016}
\EventLocation{Little Whinging, United Kingdom}
\EventLogo{}
\SeriesVolume{42}
\ArticleNo{23}
%%%%%%%%%%%%%%%%%%%%%%%%%%%%%%%%%%%%%%%%%%%%%%%%%%%%%%

\usepackage{amsmath,amsfonts,amssymb,pifont,marvosym}
\usepackage{multicol}
\usepackage{thmtools,thm-restate}
\usepackage{nicefrac}
\usepackage{booktabs}
\usepackage{multirow}
\usepackage{float}
\usepackage{tikz}
\usepackage{graphicx}
\usepackage[section]{placeins}
\usepackage[ruled,vlined,linesnumbered,norelsize]{algorithm2e}
\DontPrintSemicolon

\SetKwSty{texttt}
\SetCommentSty{emph}

\usepackage{pgfplots}
\usepgfplotslibrary{external}
\tikzexternalize
\tikzsetexternalprefix{stats/}

\newcommand{\externalizedfigure}[2]{
  \includegraphics{experiments/paper-figure#1}
}

\newcommand{\definitionname}[1]{\emph{(#1)}\textbf{.}}

\newcommand{\Oh}[1]{\ensuremath{\mathcal{O}(#1)}}

\newcommand{\Partition}{\ensuremath{\mathrm{\Pi}}}%
\newcommand{\DefPartition}[1]{\ensuremath{\mathrm{\Pi}_{\scriptstyle #1}}}%
\newcommand{\Prepacking}{\ensuremath{\mathrm{\Psi}}}%
\newcommand{\DefPrepacking}[1]{\ensuremath{\mathrm{\Psi}_{\scriptstyle #1}}}%

\newcommand{\subhypergraph}[1]{\ensuremath{H_{#1}}}
\newcommand{\fixedvertices}{\ensuremath{P}}

\newcommand{\balancedconstraint}[1]{\ensuremath{L_{#1}}}
\newcommand{\specialbalancedconstraint}[2]{\ensuremath{L_{#1}^{\scriptstyle #2}}}

\newcommand{\subpartition}[2]{\ensuremath{V_{#1}^{\scriptscriptstyle (#2)}}}

\newcommand{\afdbound}[2]{\ensuremath{h_{#1}(#2)}}

\newcommand{\loadbalancingweight}{\ensuremath{c}}

\newcommand{\nonprepackedvertices}{\ensuremath{O}}
\newcommand{\heaviestvertices}[1]{\ensuremath{\nonprepackedvertices_{#1}}}

\newcommand{\opt}{\ensuremath{\mathrm{OPT}}}
\newcommand{\lpt}{\ensuremath{\mathrm{LPT}}}

\newcommand{\Partitioner}[1]{\texttt{#1}}
\newcommand{\InstanceType}[1]{\textsc{#1}}

\definecolor{fuchsiapink}{rgb}{1.0, 0.47, 1.0}

\newcommand{\plusplus}{\texttt{++}}
\newcommand{\Cpp}[1]{C\plusplus#1}
\newcommand{\gpp}[1]{g\plusplus#1}

\newcommand{\splitatcommas}[1]{%
  \begingroup
  \begingroup\lccode`~=`, \lowercase{\endgroup
    \edef~{\mathchar\the\mathcode`, \penalty0 \noexpand\hspace{0pt plus 1em}}%
  }\mathcode`,="8000 #1%
  \endgroup
}

\begin{document}

\maketitle

\begin{abstract}
The balanced hypergraph partitioning problem (HGP) is to partition the vertex set of a hypergraph into $k$
disjoint blocks of bounded weight, while minimizing an objective function defined on the hyperedges.
Whereas real-world applications often use vertex and edge weights to accurately model the underlying problem,
the HGP research community commonly works with unweighted instances.

In this paper, we argue that, in the presence of vertex weights, current balance constraint definitions
 either yield infeasible partitioning problems
or allow unnecessarily large imbalances and propose a new definition
that overcomes these problems. We show that state-of-the-art hypergraph partitioners often struggle considerably
with weighted instances and tight balance constraints (even with our new balance definition). Thus, we present a
recursive-bipartitioning technique that is able to reliably compute balanced (and hence feasible) solutions. The
proposed method balances the partition by pre-assigning a small subset of the heaviest vertices to the two
blocks of each bipartition (using an algorithm originally developed for the job scheduling problem) and optimizes
the actual partitioning objective on the remaining vertices.
We integrate our algorithm into the multilevel hypergraph partitioner \Partitioner{KaHyPar} and show that our approach
is able to compute balanced partitions of high quality on a diverse set of benchmark instances.
\end{abstract}

\section{Introduction}
\label{sec:intro}

Hypergraphs are a generalization of graphs where each hyperedge can connect more than two vertices.
The $k$-way hypergraph partitioning problem (HGP) asks for a partition of the vertex set into $k$ disjoint
blocks, while minimizing an objective function defined on the hyperedges. Additionally, a balance constraint
requires that the weight of each block is smaller than or equal to a predefined upper bound
(most often $\balancedconstraint{k} := (1+\varepsilon)\lceil \frac{c(V)}{k} \rceil$ for some parameter $\varepsilon$,
where $c(V)$ is the sum of all vertex weights).
The hypergraph partitioning problem is NP-hard~\cite{Lengauer:1990} and it is even NP-hard to find good
approximations~\cite{DBLP:journals/ipl/BuiJ92}.
The most commonly used heuristic to solve HGP
in practice is the multilevel paradigm~\cite{KaHyPar-K, ccatalyurek1996decomposing, DBLP:conf/dac/KarypisAKS97}
which consists of three phases:
First, the hypergraph is \emph{coarsened} to obtain a hierarchy of smaller hypergraphs.
After an \emph{initial partitioning} algorithm is applied to the smallest hypergraph, \emph{coarsening} is undone, and,
at each level, \emph{refinement} algorithms are used to improve the quality of the solution.

The two most prominent application areas of HGP are
very large scale integration (VLSI) design~\cite{DAlpert, DBLP:conf/dac/KarypisAKS97} and parallel computation
of the sparse matrix-vector product~\cite{ccatalyurek1996decomposing}.
In the former, HGP is used to divide a circuit into two or more blocks such that the number of external wires interconnecting circuit elements in different blocks is minimized. In this setting, each vertex is associated with a weight equal to the area of the respective circuit element~\cite{ISPD98} and tightly-balanced partitions minimize the total area required by the physical circuit~\cite{DBLP:conf/iccad/DuttT97}.
In the latter, HGP is used to optimize the communication volume for parallel computations of
sparse matrix-vector products~\cite{ccatalyurek1996decomposing}. In the simplest hypergraph model, vertices correspond to rows
and hyperedges to columns of the matrix (or vice versa) and a partition of the hypergraphs yields an assignment of matrix entries to processors~\cite{ccatalyurek1996decomposing}.
The work of a processor (which can be measured in terms of the number of non-zero entries~\cite{bisseling2012two}) is integrated into
the model by assigning each vertex a weight equal to its degree~\cite{ccatalyurek1996decomposing}.
Tightly-balanced partitions hence ensure that the work is distributed evenly among the processors.

Despite the importance of weighted instances for real-world applications, the HGP research community mainly uses
unweighted hypergraphs in experimental evaluations~\cite{KAHYPAR-DIS}. The main rationale hereby being that even unweighted instances
become weighted implicitly due to vertex contractions during the coarsening phase. Many partitioners therefore incorporate techniques that prevent the formation of heavy vertices~\cite{DBLP:journals/tpds/CatalyurekA99, Hauck:1996:MS:238604, KaHyPar-CA} during coarsening to facilitate finding
a feasible solution during the initial partitioning phase~\cite{KAHYPAR-DIS}.
However, in practice, many weighted hypergraphs derived from real-world applications already contain heavy vertices -- rendering
the mitigation strategies of today's multilevel hypergraph partitioners ineffective. The popular ISPD98 VLSI benchmark set~\cite{ISPD98}, for example,
includes instances in which vertices can weigh up to $10\%$ of the total weight of the hypergraph.

\subparagraph*{Contributions and Outline}
After introducing basic notation in Section~\ref{sec:preliminaries}
and presenting related work in Section~\ref{sec:related_work}, we first
formulate an alternative balance constraint definition in Section~\ref{sec:balance_constraint} that
 overcomes some drawbacks of existing definitions in presence of vertex weights.
 In Section~\ref{sec:balanced_partitioning}, we then present an algorithm that enables partitioners based on the
 recursive bipartitioning (RB) paradigm to reliably compute balanced partitions for weighted hypergraphs.
 Our approach is based on the observation that usually only a small subset of the heaviest vertices
 is \emph{critical} to satisfy the balance constraint. We show that pre-assigning these vertices
 to the two blocks of each bipartition (i.e., treating them as \emph{fixed vertices}) and
 optimizing the actual objective function on the remaining vertices
 yields provable balance guarantees for the resulting $k$-way partition.
 We implemented our algorithms in the open source HGP framework \Partitioner{KaHyPar}~\cite{KAHYPAR-DIS}.
 The experimental evaluation presented in Section~\ref{sec:experiments} shows
 that our new approach (called \Partitioner{KaHyPar-BP}) is able to compute balanced partitions
 for all instances of a large real-world benchmark set (without increasing the running time or decreasing the solution quality), while other
 partitioners such as the latest versions of \Partitioner{KaHyPar}, \Partitioner{hMetis}, and \Partitioner{PaToH} produced imbalanced partitions
 on $4.9\%$ up to $42\%$ of the instances for $\varepsilon = 0.01$ ($4.3\%$
 up to $23.1\%$ for $\varepsilon = 0.03$). Section~\ref{sec:conclusion} concludes the paper.

\section{Preliminaries}
\label{sec:preliminaries}

A \emph{weighted hypergraph} $H=(V,E,c,\omega)$ is defined as a set of vertices $V$ and a set of hyperedges/nets $E$ with vertex weights $c:V \to \mathbb{R}_{>0}$ and net weights $\omega:E \to \mathbb{R}_{>0}$, where each net $e$ is a subset of the vertex set $V$ (i.e., $e \subseteq V$).
We extend $c$ and $\omega$ to sets in the natural way, i.e., $c(U) :=\sum_{v\in U} c(v)$ and $\omega(F) :=\sum_{e \in F} \omega(e)$.
Given a subset $V' \subseteq V$, the \emph{subhypergraph}
$H_{V'}$ is defined as $H_{V'}:=(V', \{e \cap V' \mid e \in E : e \cap V' \neq \emptyset \}, c, \omega)$.

A \emph{$k$-way partition} of a hypergraph $H$ is a partition of the vertex set $V$ into $k$
non-empty disjoint subsets $\DefPartition{k} = \{V_1, \ldots, V_k\}$.
We refer to a $k$-way
partition $\DefPrepacking{k} = \{P_1, \ldots, P_k\}$ of a subset
$P \subseteq V$ as a \emph{$k$-way prepacking}. We call a vertex $v \in P$ a \emph{fixed} vertex
and a vertex $v \in V \setminus P$ an \emph{ordinary} vertex. During partitioning, fixed vertices are not allowed to be moved to a different block of the partition.
A $k$-way partition $\DefPartition{k}$ is \emph{$\varepsilon$-balanced} if each block $V_i$ satisfies the
\emph{balance constraint}: $c(V_i) \leq \balancedconstraint{k} := (1+\varepsilon)\lceil \frac{c(V)}{k} \rceil$ for
some parameter $\mathrm{\varepsilon}$.
The $k$\emph{-way hypergraph partitioning problem} initialized with a $k$-way prepacking $\DefPrepacking{k} = \{P_1, \ldots, P_k\}$ is to find
an $\varepsilon$-balanced $k$-way partition $\DefPartition{k} = \{V_1, \ldots, V_k\}$ of a hypergraph $H$ that minimizes an objective function and
satisfies that $\forall i \in \{1, \ldots, k\}: P_i \subseteq V_i$.
In this paper, we optimize the \emph{connectivity} metric $(\lambda - 1)(\Pi) := \sum_{e \in E} (\lambda(e) - 1) \: \omega(e)$,
where $\lambda(e) := |\{V_i \in \Pi \mid  V_i \cap e \neq \emptyset\}|$.

The \emph{most balanced partition problem} is to
find a $k$-way partition $\DefPartition{k}$ of a weighted hypergraph $H = (V,E,c,\omega)$ such that
$\max(\DefPartition{k}) := \max_{V' \in \DefPartition{k}} c(V')$ is minimized.
For an optimal solution
$\Partition_{\opt}$ it holds that there exists no other $k$-way partition $\DefPartition{k}'$ with
$\max(\DefPartition{k}') < \max(\Partition_{\opt})$. We use $\opt(H,k) := \max(\Partition_{\opt})$ to denote the weight of the heaviest block of an
optimal solution.
Note that the problem is equivalent to the
most common version of the \emph{job scheduling} problem: Given a sequence $J = \langle j_1, \ldots, j_n \rangle$ of $n$ computing jobs each
associated with a \emph{processing time} $p_i$ for $i \in [1,n]$, the task is to find an assignment of the $n$ jobs to
$k$ identical machines (each job $j_i$ runs exclusively on a machine for exactly $p_i$ time units) such that the latest completion time of a job is minimized.

\section{Related Work}
\label{sec:related_work}

In the following, we will focus on work closely related to our main contributions.
For an extensive overview on hypergraph partitioning we refer the reader to existing literature~\cite{DAlpert, DBLP:conf/dimacs/2012, DPapa2007, KAHYPAR-DIS}.
Well-known multilevel HGP software packages with certain distinguishing characteristics include
\Partitioner{PaToH}~\cite{DBLP:journals/jpdc/AykanatCU08, ccatalyurek1996decomposing} (originating from scientific computing),
\Partitioner{hMetis}~\cite{DBLP:conf/dac/KarypisAKS97, DBLP:journals/vlsi/KarypisK00} (originating from VLSI design),
\Partitioner{KaHyPar}~\cite{KaHyPar-MF-JEA,KaHyPar-CA} (general purpose, $n$-level),
\Partitioner{Moondrian}~\cite{DBLP:journals/siamrev/VastenhouwB05} (sparse matrix partitioning),
\Partitioner{UMPa}~\cite{DBLP:conf/dimacs/CatalyurekDKU12} (multi-objective) and
\Partitioner{Zoltan}~\cite{DBLP:conf/ipps/DevineBHBC06} (distributed partitioner).

\subparagraph*{Partitioning with Vertex Weights.}
The most widely used techniques to improve the quality of a $k$-way partition are move-based
local search heuristics~\cite{FiducciaM82,KLAlgorithm} that greedily move
vertices according to a \emph{gain} value (i.e., the improvement in the objective function).
Vertex moves violating the balance constraint are usually rejected,
which can significantly deteriorate solution
quality in presence of varying vertex weights~\cite{CaldwellKM00}. This issue is addressed using techniques
that allow intermediate balance violations~\cite{DBLP:conf/iccad/DuttT97} or
use temporary relaxations of the balance constraint~\cite{DBLP:conf/aspdac/CaldwellKM00, CaldwellKM00}.
Caldwell et al.~\cite{CaldwellKM00} proposed to
preassign each vertex with a weight greater than the average block weight $\balancedconstraint{k}$
to a seperate block before partitioning (treated as fixed vertices) and build the actual $k$-way partition around them.
All of these techniques were developed and evaluated for flat (i.e., non-multilevel)
partitioning algorithms. In the multilevel setting, even unweighted instances become
implicitly weighted due to vertex contractions in the coarsening phase, which is why
the formation of heavy vertices is prevented by penalizing the contraction of vertices with large weights~\cite{DBLP:journals/tpds/CatalyurekA99, Hauck:1996:MS:238604, DBLP:journals/tvlsi/ShinK93} or enforcing a strict upper bound for vertex weights throughout the coarsening process~\cite{KaHyPar-K, KaHyPar-CA}. If the input hypergraph is unweighted, the aforementioned techniques often suffice to find a
feasible solution~\cite{KAHYPAR-DIS}. \Partitioner{PaToH}~\cite{PaToHManual} additionally uses bin packing techniques during initial partitioning.

\subparagraph*{Job Scheduling Problem.}
The job scheduling problem is NP-hard~\cite{garey1979computers}
and we refer the reader to existing literature~\cite{graham1979optimization, pinedo2012scheduling} for a comprehensive overview of
the research topic. In this work, we make use of the \emph{longest processing time} (\lpt) algorithm
proposed by Graham~\cite{graham1969bounds}. We will explain the algorithm in the context of the most balanced
partition problem defined in Section~\ref{sec:preliminaries}: For a weighted hypergraph $H = (V,E,c,\omega)$,
the algorithm iterates over the vertices of $V$ sorted in decreasing vertex-weight order and assigns each vertex to the
block of the $k$-way partition with the lowest weight.
The algorithm can be implemented to run in $\Oh{|V|\log{|V|}}$ time, and for a $k$-way partition $\DefPartition{k}$ produced by the algorithm it holds that
$\max(\DefPartition{k}) \le (\frac{4}{3} - \frac{1}{3k})\opt(H,k)$.

\subparagraph*{KaHyPar.}
The \textbf{Ka}rlsruhe \textbf{Hy}pergraph \textbf{Par}titioning framework takes the multilevel paradigm to its extreme by only contracting a single vertex in every level of the
hierarchy. \Partitioner{KaHyPar} provides recursive bipartitioning~\cite{KaHyPar-R} as well as
direct $k$-way partitioning algorithms~\cite{KaHyPar-K} (direct $k$-way uses RB in the
initial partitioning phase). It uses a community detection algorithm as preprocessing step to restrict contractions to densely
connected regions of the hypergraph during coarsening~\cite{KaHyPar-CA}. Furthermore, it employs
a portfolio of bipartitioning algorithms for initial partitioning
of the coarsest hypergraph~\cite{HeuerBA, KaHyPar-R}, and,  during the refinement phase, improves the partition
with a highly engineered variant of the classical FM local search~\cite{KaHyPar-K} and a refinement technique
based on network flows~\cite{KAHYPAR-HFC,KaHyPar-MF-JEA}.

During RB-based partitioning, \Partitioner{KaHyPar} ensures that the solution is balanced by adapting the imbalance
ratio for each bipartition individually. Let $\subhypergraph{V'}$ be the subhypergraph of the current bipartition that should be partitioned
recursively into $k' \le k$ blocks. Then,
\begin{equation}
\label{eq:adaptive_imbalance}
\varepsilon' := \left( (1 + \varepsilon) \frac{c(V)}{k} \cdot \frac{k'}{c(V')} \right)^{\frac{1}{\lceil \log_2(k') \rceil}} - 1
\end{equation}
is the imbalance ratio used for the bipartition of $\subhypergraph{V'}$. The equation is based on the
observation that the worst-case block weight of the resulting $k'$-way partition of $\subhypergraph{V'}$
obtained via RB is smaller than $(1 + \varepsilon')^{\lceil \log_2(k') \rceil} \frac{c(V')}{k'}$,
if $\varepsilon'$ is used for all further bipartitions.
Requiring that this weight must be smaller or equal to $\balancedconstraint{k} = (1 + \varepsilon) \lceil \frac{c(V)}{k} \rceil$ leads
to the formula defined in Equation~\ref{eq:adaptive_imbalance}.

\section{A New Balance Constraint For Weighted Hypergraphs}
\label{sec:balance_constraint}

A $k$-way partition of a weighted hypergraph $H = (V,E,c,\omega)$ is balanced, if the weight of each block
is below some predefined upper bound. In the literature, the most commonly used bounds
are  $\balancedconstraint{k} := (1 + \varepsilon)\lceil \frac{c(V)}{k} \rceil$
(standard definition) and $\specialbalancedconstraint{k}{\max} := \balancedconstraint{k} + \max_{v \in V} c(v)$~\cite{FiducciaM82, KAHYPAR-DIS, Schulz2013_1000035713}.
The latter was initially proposed by Fiduccia and Mattheyses~\cite{FiducciaM82} for bipartitioning to
 ensure that the highest-gain vertex can always be moved to the opposite block.

Both definitions exhibit shortcomings in the presence of heavy vertices: As soon as the hypergraph contains even
a single vertex with $c(v) > \balancedconstraint{k}$, no feasible solution exists when the block weights are constrained
by $\balancedconstraint{k}$, while for $\specialbalancedconstraint{k}{\max}$ it follows that $\specialbalancedconstraint{k}{\max} > 2\balancedconstraint{k}$ -- allowing large variations in block weights even if $\varepsilon$ is small.
In the following, we therefore propose a new balance constraint that (i) guarantees the existence of
an $\varepsilon$-balanced $k$-way partition and (ii) avoids unnecessarily large imbalances.

While the optimal solution of the most balanced partition problem would yield a partition with the best possible balance,
it is not feasible in practice to use $\specialbalancedconstraint{k}{\opt} := (1 + \varepsilon)\opt(H,k)$ as balance constraint, because finding such a $k$-way partition is NP-hard~\cite{garey1979computers}. Hence, we
propose to use the bound provided by the $\lpt$ algorithm instead:
\begin{equation}
  \label{eq:new_balance_constraint}
\specialbalancedconstraint{k}{\lpt} := (1 + \varepsilon)~\lpt(H,k) \le \left( \frac{4}{3} - \frac{1}{3k} \right) \specialbalancedconstraint{k}{\opt}.
\end{equation}
Note that if the hypergraph is unweighted, the \lpt~algorithm will always
find an optimal solution with $\opt(H,k) = \lceil \frac{|V|}{k} \rceil$ and thus,
$\specialbalancedconstraint{k}{\lpt}$ is equal to $\balancedconstraint{k}$. Since all of today's partitioning algorithms
bound the maximum block weight by $\balancedconstraint{k}$, Section~\ref{sec:experiments} gives more details on how we employ this new balance constraint
definition in our experimental evaluation.

\section{Multilevel Recursive Bipartitioning with Vertex Weights Revisited}
\label{sec:balanced_partitioning}

\begin{figure}[t!]
  \centering
  \includegraphics[width=0.8\textwidth]{}
  \caption{
    Illustration of a deeply (left, green line) and a non-deeply balanced bipartition (left, red line). The numbers
    in each circle denotes the vertex weights. In both cases, the
    hypergraph is partitioned into $k = 4$ blocks with $\varepsilon = 0$ via recursive bipartitioning.
    Thus, the weight of heaviest block must be smaller or equal to $\balancedconstraint{4} = 6$ and for the
    first bipartition, we use $\balancedconstraint{2} = 12$ as an upper bound.}
  \label{fig:deep_balance}
\end{figure}

Most multilevel hypergraph partitioners either employ recursive bipartitioning directly~\cite{ccatalyurek1996decomposing, DBLP:conf/ipps/DevineBHBC06, DBLP:conf/dac/KarypisAKS97, KaHyPar-R, DBLP:journals/siamrev/VastenhouwB05}
or use RB-based algorithms in the initial partitioning phase to compute an initial $k$-way partition of the coarsest
hypergraph~\cite{KaHyPar-K, DBLP:journals/jpdc/AykanatCU08, DBLP:conf/dimacs/CatalyurekDKU12, DBLP:journals/vlsi/KarypisK00}.
In both settings,  a $k$-way partition is derived by first computing a bipartition $\DefPartition{2} = \{V_1, V_2\}$ of the (input/coarse) hypergraph $H$
and then recursing on the subhypergraphs $\subhypergraph{V_1}$ and $\subhypergraph{V_2}$ by partitioning
$V_1$ into $\lceil \frac{k}{2} \rceil$ and $V_2$ into $\lfloor \frac{k}{2} \rfloor$ blocks. Although \Partitioner{KaHyPar} adaptively adjusts the allowed imbalance at each bipartitioning step (using the imbalance factor $\varepsilon'$ as defined in Equation~\ref{eq:adaptive_imbalance}),
an \emph{unfortunate} distribution of the vertices in some bipartitions \DefPartition{2} can easily lead to instances for
which it is impossible to find a balanced solution during the recursive calls -- even though the current bipartition \DefPartition{2} satisfies the adjusted balance constraint. An example is shown in Figure~\ref{fig:deep_balance} (left): Although the current bipartition (indicated by the red line) is perfectly balanced, it will not be possible to recursively partition the subhypergraph induced by the vertices of $V_2$ into two blocks of equal weight, because each of the three vertices has a weight of four.

To capture this problem, we introduce the notion of \textit{deep balance}:

\begin{definition}{\definitionname{Deep Balance}}
\label{def:deep_balance}
Let $H = (V,E,c,\omega)$ be a weighted hypergraph for which we want to compute an $\varepsilon$-balanced $k$-way partition,
and let $\subhypergraph{V'}$ be a subhypergraph of $H$ which should be partitioned into $k' \le k$ blocks via recursive bipartitioning.
A subhypergraph $\subhypergraph{V'}$ is deeply balanced w.r.t. $k'$, if there exists
a $k'$-way partition $\DefPartition{k'}$ of $\subhypergraph{V'}$ such that $\max(\DefPartition{k'}) \le
\balancedconstraint{k} := (1 + \varepsilon)\lceil \frac{c(V)}{k} \rceil$. A bipartition
$\DefPartition{2} = \{V_1, V_2\}$ of $\subhypergraph{V'}$ is deeply balanced w.r.t. $k'$, if the subhypergraphs
$\subhypergraph{V_1}$ and $\subhypergraph{V_2}$ are deeply balanced with respect to $\lceil \frac{k'}{2} \rceil$
resp.~$\lfloor \frac{k'}{2} \rfloor$.
\end{definition}

If a subhypergraph $\subhypergraph{V'}$ is deeply balanced with respect to $k'$, there always exists a $k'$-way
partition $\DefPartition{k'}$ of $\subhypergraph{V'}$ such that weight of the heaviest block satisfies the original balance
constraint \balancedconstraint{k} imposed on the partition of the input hypergraph $H$. Moreover, there also always exists
a deeply balanced bipartition $\DefPartition{2} := \{V_1,V_2\}$ ($V_1$ is the union of the first $\lceil \frac{k'}{2} \rceil$
and $V_2$ of the last $\lfloor \frac{k'}{2} \rfloor$ blocks of $\DefPartition{k'}$). Hence, a RB-based partitioning algorithm that
is able to compute deeply balanced bipartitions on deeply balanced subhypergraphs will always compute $\varepsilon$-balanced $k$-way partitions
(assuming the input hypergraph is deeply balanced).

\subparagraph*{Deep Balance and Adaptive Imbalance Adjustments.}
Computing deeply balanced bipartitions in the RB setting guarantees that the resulting $k$-way partition is $\varepsilon$-balanced.
Thus, the concept of deep balance could replace the adaptive imbalance factor $\varepsilon'$ employed in
\Partitioner{KaHyPar}~\cite{KaHyPar-R} (see Equation~\ref{eq:adaptive_imbalance}).
However, as we will see in the following example, combining both approaches gives the partitioner
more flexibility (in terms of feasible vertex moves during refinement).
Assume that we want to compute a $4$-way partition via recursive bipartitioning
and that the first bipartition $\DefPartition{2} := \{V_1, V_2\}$ is deeply balanced with
$c(V_1) = (1 + \varepsilon) \lceil \frac{c(V)}{2} \rceil$. The deep-balance property ensures that
we can further partition $V_1$ into two blocks such that the weight of the heavier block is smaller than $\balancedconstraint{4}$.
However, this bipartition has to be perfectly balanced:
\begin{equation} \balancedconstraint{2} = (1 + \overline{\varepsilon}) \Bigl\lceil \frac{c(V_1)}{2} \Bigr\rceil = (1 + \overline{\varepsilon}) \Bigl\lceil \frac{(1 + \varepsilon) \lceil \frac{c(V)}{2} \rceil}{2} \Bigr\rceil \le (1 + \varepsilon) \Bigl\lceil \frac{c(V)}{4} \Bigr\rceil = \balancedconstraint{4} \Rightarrow \overline{\varepsilon} \approx 0 .\end{equation}
If we would have computed the first bipartition with an adjusted imbalance factor $\varepsilon'$, then
$\max(\DefPartition{2}) \le (1 + \varepsilon') \lceil \frac{c(V)}{2} \rceil = \sqrt{1 + \varepsilon} \lceil \frac{c(V)}{2} \rceil$ -- providing more flexibility
for subsequent bipartitions. In the following, we therefore focus on computing deeply $\varepsilon'$-balanced bipartitions.

\subparagraph*{Deep Balance and Multilevel Recursive Bipartitioning.}
In general, computing a deeply balanced bipartition $\DefPartition{2} := \{V_1, V_2\}$ w.r.t.~$k$ is NP-hard, as we must show that there exists a $k$-way partition $\DefPartition{k}$
of $H$ with $\max(\DefPartition{k}) \le \balancedconstraint{k}$,
which can be reduced to the most balanced partition problem
presented in Section~\ref{sec:preliminaries}.
However, we can first compute a $k$-way partition
$\DefPartition{k} := \{V_1',\ldots,V_k'\}$ using the \lpt~algorithm,
thereby approximating an optimal solution.
If $\max(\DefPartition{k}) \le \balancedconstraint{k}$,
we can then construct a deeply balanced bipartition $\DefPartition{2} = \{V_1, V_2\}$ by choosing
$V_1 := V_1' \cup \ldots \cup V_{\lceil \frac{k}{2} \rceil}'$ and $V_2 := V_{\lceil \frac{k}{2} \rceil + 1}' \cup \ldots \cup V_k'$.
Unfortunately, this approach completely ignores the optimization of the objective function -- yielding balanced partitions of low quality.
If such a bipartition were to be used as initial solution in the multilevel setting, the objective could still be optimized during the
refinement phase. However, this would necessitate that refinement algorithms are aware of the concept of deep balance and that they only
perform vertex moves that don't destroy the deep-balance property of the starting solution. Since this is infeasible in practice,
we propose a different approach that involves fixed vertices.

The key idea of our approach is to compute a prepacking
$\Prepacking = \{P_1, P_2\}$ of the $m=|P_1|+|P_2|$ heaviest vertices
of the hypergraph and to show that this prepacking suffices to ensure that each $\varepsilon'$-balanced bipartition $\DefPartition{2} = \{V_1,V_2\}$
with $P_1 \subseteq V_1$ and $P_2 \subseteq V_2$ is deeply balanced.
Note that the upcoming definitions and theorems are formulated from the perspective of the first
bipartition of the input hypergraph $H$ to simplify notation. They can be generalized to subhypergraphs
$\subhypergraph{V'}$ in a similar fashion as was done in Definition~\ref{def:deep_balance}.
Furthermore, we say that the bipartition $\DefPartition{2} = \{V_1,V_2\}$ respects a prepacking $\Prepacking = \{P_1, P_2\}$,
if $P_1 \subseteq V_1$ and $P_2 \subseteq V_2$, and that the bipartition is balanced, if
$\max(\DefPartition{2}) \le \balancedconstraint{2} := (1 + \varepsilon') \lceil \frac{c(V_1 \cup V_2)}{2} \rceil$
(with $\varepsilon'$ as defined in Equation~\ref{eq:adaptive_imbalance}).
The following definition formalizes our idea.

\begin{definition}{\definitionname{Sufficiently Balanced Prepacking}}
Let $H = (V,E,c,\omega)$ be a hypergraph for which we want to compute an $\varepsilon$-balanced $k$-way partition via
recursive bipartitioning. We call a prepacking $\Prepacking$ of $H$ sufficiently
balanced if every balanced bipartition $\DefPartition{2}$ respecting $\Prepacking$ is deeply balanced
with respect to $k$.
\label{def:suffiently_balanced}
\end{definition}

Our approach to compute $\varepsilon$-balanced $k$-way partitions is outlined in Algorithm~\ref{algo:recursive_bipartitioning}.
We first compute a bipartition $\DefPartition{2}$. Before recursing on each of the two induced subhypergraphs, we check if $\DefPartition{2}$
is deeply balanced using the \lpt~algorithm in a similar fashion as described in the beginning of this paragraph. If it is not deeply balanced,
we compute a sufficiently balanced prepacking $\Prepacking$ and re-compute $\DefPartition{2}$ -- treating the vertices of the prepacking as fixed vertices. If this second bipartitioning call was able to compute a balanced bipartition, we found a deeply balanced partition and proceed to
partition the subhypergraphs recursively.

Note that, in general, we may not detect that $\DefPartition{2}$ is deeply balanced or
fail to find a sufficiently balanced prepacking $\Prepacking$ or a balanced bipartition
$\DefPartition{2}$, since all involved problems are NP-hard.
However, as we will see in Section~\ref{sec:experiments}, this only happens rarely in practice.

\begin{algorithm}
  \SetKwFunction{RB}{\small recursiveBipartitioning}
  \SetKwFunction{Bipartition}{\small multilevelBipartitioning}
  \SetKwFunction{Sufficient}{\small sufficientlyBalancedPrepacking}
  \SetKwProg{Fn}{Function}{:}{}
  \KwData{\small Hypergraph $H$ for which we seek an $\varepsilon$-balanced $k$-way
                 partition and subhypergraph $\subhypergraph{V'}$ of $H$ which is to be to bipartitioned
                 recursively into $k' \le k$ blocks.}
  \Fn{\RB{$H$, $k$, $\varepsilon$ $\subhypergraph{V'}$, $k'$}} {
    $\balancedconstraint{2} \gets (1 + \varepsilon') \lceil \frac{c(V')}{2} \rceil$ \tcp*{\footnotesize with $\varepsilon'$ as defined in Equation~\ref{eq:adaptive_imbalance}} \label{algo:balance_constraint}
    $\DefPartition{2} := \{V_1, V_2\} \gets \Bipartition{$\subhypergraph{V'}$, \balancedconstraint{2}, $\emptyset$}$ \tcp*{\footnotesize $\emptyset =$ empty prepacking} \label{algo:multilevel_bipartition}
    \lIf{ $k' = 2$ } {
      \Return{\DefPartition{2}}
    }
    \ElseIf{ $\DefPartition{2}$ is not deeply balanced w.r.t. $k'$ \label{algo:deep_balance_check} } { \label{algo:prepacking_triggered}
      $\Prepacking \gets \Sufficient{$H$, $k$, $\varepsilon$, $\subhypergraph{V'}$, $k'$}$ \tcp*{\footnotesize see Algorithm~\ref{algo:prepacking}} \label{algo:sufficiently_balanced_prepacking}
      $\DefPartition{2} \gets \Bipartition{$\subhypergraph{V'}$, \balancedconstraint{2}, $\Prepacking$}$ \tcp*{\footnotesize treating \Prepacking~as fixed vertices} \label{algo:restart_bipartition}
    }
    $\DefPartition{k_1} \gets \RB{$H$, $k$, $\varepsilon$, \subhypergraph{V_1}, $k_1$}$ with $k_1 := \lceil \frac{k'}{2} \rceil$ \;
    $\DefPartition{k_2} \gets \RB{$H$, $k$, $\varepsilon$, \subhypergraph{V_2}, $k_2$}$ with $k_2 := \lfloor \frac{k'}{2} \rfloor$ \;
    \Return{$\DefPartition{k_1} \cup \DefPartition{k_2}  $}
  }
  \caption{Recursive Bipartitioning Algorithm}
  \label{algo:recursive_bipartitioning}
 \end{algorithm}

\subparagraph*{Computing a Sufficiently Balanced Prepacking.}
The prepacking~\Prepacking~is constructed by incrementally assigning vertices to \Prepacking~in decreasing
order of weight and checking a property $\mathcal{P}$
after each assignment that, if satisfied,  implies that the current prepacking is sufficiently balanced. In the proof of property $\mathcal{P}$ , we will extend a $k$-way prepacking $\DefPrepacking{k}$ to an $\varepsilon$-balanced $k$-way partition
$\DefPartition{k}$ using the \lpt~algorithm and use the
following upper bound on the weight of the heaviest block of $\DefPartition{k}$.

\begin{restatable}{lemma}{worstcaseaf}{\definitionname{\lpt~Bound}}
\label{lemma:worstcaseaf}
Let $H = (V,E,c,\omega)$ be a weighted hypergraph, $\DefPrepacking{k}$ be a $k$-way prepacking for a set of
fixed vertices $\fixedvertices \subseteq V$, and let $\nonprepackedvertices:= \langle v_1, \ldots, v_m~|~v_i \in V \setminus \fixedvertices \rangle$ be the sequence of all ordinary
vertices of $V \setminus \fixedvertices$ sorted in decreasing order of weight.
If we assign the remaining vertices $\nonprepackedvertices$ to the blocks of $\DefPrepacking{k}$ by using the \lpt~algorithm,
we can extend $\DefPrepacking{k}$ to a $k$-way partition $\DefPartition{k}$ of $H$ such that the weight of the heaviest block
is bound by:
\vspace{-0.2cm}
\[\max(\DefPartition{k}) \le \max\{\frac{1}{k}c(\fixedvertices) + \afdbound{k}{\nonprepackedvertices}, \max(\DefPrepacking{k})\},
  \text{ with }
  \afdbound{k}{\nonprepackedvertices} := \max_{i \in \{1,\ldots,m\}} c(v_i) +
  \frac{1}{k} \sum_{j = 1}^{i - 1} c(v_j).\]
\end{restatable}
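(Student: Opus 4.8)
The plan is to actually run the \lpt{} algorithm on top of the given prepacking to obtain a $k$-way partition $\DefPartition{k} = \{V_1, \dots, V_k\}$ with $P_i \subseteq V_i$, and then to bound $\max(\DefPartition{k})$ by analysing its heaviest block. I would process the ordinary vertices in the prescribed order $v_1, \dots, v_m$ of non-increasing weight. The single fact I would extract from the \lpt{} rule is the following: whenever $v_i$ is placed it enters a currently lightest block, and just before this placement the total weight spread over the $k$ blocks equals $c(\fixedvertices) + \sum_{j=1}^{i-1} c(v_j)$ (all fixed vertices plus the already-placed prefix of ordinary vertices); hence that lightest block then has weight at most $\tfrac{1}{k}\bigl(c(\fixedvertices) + \sum_{j=1}^{i-1} c(v_j)\bigr)$.

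Now let $V_\ell$ be a heaviest block of $\DefPartition{k}$. If no ordinary vertex was assigned to $V_\ell$, then $V_\ell = P_\ell$, so $\max(\DefPartition{k}) = c(P_\ell) \le \max(\DefPrepacking{k})$ and the bound holds. Otherwise, let $v_i$ be the last ordinary vertex placed into $V_\ell$ (the one of largest index). By the observation above, the weight of $V_\ell$ immediately before $v_i$ is inserted is at most $\tfrac{1}{k} c(\fixedvertices) + \tfrac{1}{k}\sum_{j=1}^{i-1} c(v_j)$, and since no further ordinary vertex enters $V_\ell$ afterwards, its final weight satisfies $\max(\DefPartition{k}) = c(V_\ell) \le \tfrac{1}{k} c(\fixedvertices) + \bigl(c(v_i) + \tfrac{1}{k}\sum_{j=1}^{i-1} c(v_j)\bigr) \le \tfrac{1}{k} c(\fixedvertices) + \afdbound{k}{\nonprepackedvertices}$, where the last step is just the definition of $\afdbound{k}{\nonprepackedvertices}$ as the maximum over $i$ of exactly these expressions. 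Combining the two cases yields $\max(\DefPartition{k}) \le \max\{\tfrac{1}{k} c(\fixedvertices) + \afdbound{k}{\nonprepackedvertices},\ \max(\DefPrepacking{k})\}$.

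Everything here is elementary bookkeeping; the one point requiring care is the identification of the \emph{last} ordinary vertex entering the heaviest block, since only at the moment of \emph{its} insertion may we invoke that the block is lightest (at later insertions into other blocks this need not hold), and this is precisely what converts the plain prefix sum $\sum_{j<i} c(v_j)$ into the averaged term $\tfrac{1}{k}\sum_{j<i} c(v_j)$. I do not expect a genuine obstacle: the argument is the classical \lpt{} makespan analysis adapted to machines pre-loaded with the weights of the fixed vertices, with the $\max(\DefPrepacking{k})$ term accounting for the degenerate case where the heaviest machine receives no further job.
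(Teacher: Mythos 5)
Your proposal is correct and follows essentially the same argument as the paper: it exploits that the \lpt{} rule places each ordinary vertex into a currently lightest block, whose weight is therefore at most the average $\tfrac{1}{k}\bigl(c(\fixedvertices) + \sum_{j<i} c(v_j)\bigr)$, and separately handles blocks that receive no ordinary vertex via $\max(\DefPrepacking{k})$. The only difference is presentational: you make the ``last ordinary vertex entering the heaviest block'' step explicit, whereas the paper bounds the block weight after every insertion (which subsumes the last one), so the two proofs coincide in substance.
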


The proof of Lemma~\ref{lemma:worstcaseaf} can be found in Appendix~\ref{sec:af_proof}.
$\nonprepackedvertices$ is sorted in decreasing order of weight because for any permutation
$\nonprepackedvertices'$ of $\nonprepackedvertices$, it holds that $\afdbound{k}{\nonprepackedvertices} \le \afdbound{k}{\nonprepackedvertices'}$ --
resulting in the tightest bound for $\max(\DefPartition{k})$.

Assuming that the number $k$ of blocks is even (i.e., $k_1=k_2=\nicefrac{k}{2})$ to simplify notation, the balance property $\mathcal{P}$
is defined as follows (the generalized version can be found in Appendix~\ref{sec:generalized_balance_property}):

\begin{definition}{\definitionname{Balance Property $\mathcal{P}$}}
\label{def:balance_property}
Let $H = (V,E,c,\omega)$ be a hypergraph for which we want to compute an $\varepsilon$-balanced $k$-way
partition and let $\Prepacking$ be a prepacking of $H$ for a set of fixed vertices
$\fixedvertices \subseteq V$. Furthermore, let  $\heaviestvertices{t} := \langle v_1, \ldots, v_t \rangle$ be the sequence of
the $t$ heaviest ordinary vertices of $V \setminus \fixedvertices$ sorted in decreasing
order of weight such that $t$ is the smallest number that satisfies
$\max(\Prepacking) + c(\heaviestvertices{t}) \ge \balancedconstraint{2}$ (see Line~\ref{algo:balance_constraint}, Algorithm~\ref{algo:recursive_bipartitioning}).
We say that a prepacking $\Prepacking$ satisfies the balance property $\mathcal{P}$ if the following two conditions hold:
\begin{enumerate}
  \item[(i)] the prepacking~\Prepacking~is deeply balanced
  \item[(ii)] $\frac{1}{\nicefrac{k}{2}}\max(\Prepacking) + \afdbound{\nicefrac{k}{2}}{\heaviestvertices{t}} \le \balancedconstraint{k}$.
\end{enumerate}
\end{definition}

In the following, we will show that the \lpt~algorithm can be used to construct a $\nicefrac{k}{2}$-way partition $\DefPartition{\nicefrac{k}{2}}$ for both blocks of any
balanced bipartition $\DefPartition{2} = \{V_1, V_2\}$ that respects \Prepacking, such that the weight of the heaviest block can be bound by the left term of Condition~(ii).
This implies that $\max(\DefPartition{\nicefrac{k}{2}}) \le \balancedconstraint{k}$ (right term of Condition~(ii)) and thus
proofs that any balanced bipartition \DefPartition{2}~respecting \Prepacking~is deeply balanced.
Note that choosing $t$ as the smallest number that satisfies
$\max(\Prepacking) + c(\heaviestvertices{t}) \ge \balancedconstraint{2}$
minimizes the left term of Condition~(ii) (since $\afdbound{k}{\heaviestvertices{t}} \le \afdbound{k}{\heaviestvertices{t+1}}$).

\begin{theorem}
\label{theorem:sufficiently_balanced}
A prepacking $\Prepacking$ of a hypergraph $H = (V,E,c,\omega)$ that satisfies the balance property $\mathcal{P}$
is sufficiently balanced with respect to $k$.
\end{theorem}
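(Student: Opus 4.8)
The plan is to fix an \emph{arbitrary} balanced bipartition $\DefPartition{2} = \{V_1, V_2\}$ of $H$ that respects the prepacking $\Prepacking = \{P_1, P_2\}$ and to construct, for each $i \in \{1,2\}$, a $\nicefrac{k}{2}$-way partition of the subhypergraph $\subhypergraph{V_i}$ whose heaviest block has weight at most $\balancedconstraint{k}$. By Definition~\ref{def:deep_balance} this makes both $\subhypergraph{V_1}$ and $\subhypergraph{V_2}$ deeply balanced w.r.t.\ $\nicefrac{k}{2}$, hence $\DefPartition{2}$ deeply balanced w.r.t.\ $k$; and since $\DefPartition{2}$ was arbitrary, $\Prepacking$ is sufficiently balanced in the sense of Definition~\ref{def:suffiently_balanced}. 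Throughout we assume $t \ge 1$, i.e.\ $\max(\Prepacking) < \balancedconstraint{2}$, which is guaranteed by the prepacking construction; the degenerate case $\max(\Prepacking) = \balancedconstraint{2}$ forces one side of every respecting bipartition and is checked separately.

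Fix $i \in \{1,2\}$. Condition~(i) of property $\mathcal{P}$ states that $\Prepacking$ is deeply balanced; reading $\{P_1, P_2\}$ as a bipartition of $\subhypergraph{P}$ and applying Definition~\ref{def:deep_balance}, this provides a $\nicefrac{k}{2}$-way prepacking $\DefPrepacking{\nicefrac{k}{2}}$ of the fixed vertices $P_i$ with $\max(\DefPrepacking{\nicefrac{k}{2}}) \le \balancedconstraint{k}$. We treat $P_i$ as fixed and extend $\DefPrepacking{\nicefrac{k}{2}}$ with the \lpt~algorithm over the ordinary vertices of $\subhypergraph{V_i}$, that is, over the sequence $O^{(i)}$ of the vertices of $V_i \setminus P_i$ arranged by decreasing weight. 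Lemma~\ref{lemma:worstcaseaf} then yields a $\nicefrac{k}{2}$-way partition $\DefPartition{\nicefrac{k}{2}}$ of $\subhypergraph{V_i}$ with
\[
  \max(\DefPartition{\nicefrac{k}{2}}) \ \le\ \max\bigl\{\, \tfrac{1}{\nicefrac{k}{2}}\, c(P_i) + \afdbound{\nicefrac{k}{2}}{O^{(i)}},\ \balancedconstraint{k} \,\bigr\}.
\]
So it remains to bound $\tfrac{1}{\nicefrac{k}{2}}\,c(P_i) + \afdbound{\nicefrac{k}{2}}{O^{(i)}}$ by $\balancedconstraint{k}$, and by Condition~(ii) of property $\mathcal{P}$ it therefore suffices to prove the inequality $\afdbound{\nicefrac{k}{2}}{O^{(i)}} \le \afdbound{\nicefrac{k}{2}}{\heaviestvertices{t}} + \tfrac{1}{\nicefrac{k}{2}}\bigl(\max(\Prepacking) - c(P_i)\bigr)$.

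To establish this inequality I would rely on three facts. First, $O^{(i)}$ is a subsequence of the sequence of all ordinary vertices of $H$ sorted by decreasing weight, and $\heaviestvertices{t}$ is its length-$t$ prefix; hence for each $j \le t$ the $j$-th vertex of $O^{(i)}$ weighs no more than $v_j$, the $j$-th heaviest ordinary vertex of $H$. Second, balance of $\DefPartition{2}$ gives $c(O^{(i)}) = c(V_i) - c(P_i) \le \balancedconstraint{2} - c(P_i)$. Third, the minimality of $t$ in Definition~\ref{def:balance_property} gives $\balancedconstraint{2} - \max(\Prepacking) \le c(\heaviestvertices{t})$. Writing $\delta_i := \max(\Prepacking) - c(P_i)$, which is nonnegative for both $i$ because $c(P_1) + c(P_2) = c(P)$, the last two facts give $c(O^{(i)}) \le c(\heaviestvertices{t}) + \delta_i$. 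Now let $i^\ast$ be an index at which $\afdbound{\nicefrac{k}{2}}{O^{(i)}}$ attains its maximum. If $i^\ast \le t$, the first fact bounds the $i^\ast$-th term of $\afdbound{\nicefrac{k}{2}}{O^{(i)}}$ by the $i^\ast$-th term of $\afdbound{\nicefrac{k}{2}}{\heaviestvertices{t}}$, so $\afdbound{\nicefrac{k}{2}}{O^{(i)}} \le \afdbound{\nicefrac{k}{2}}{\heaviestvertices{t}}$ already. If $i^\ast > t$, I would bound the combined weight of the first $i^\ast - 1$ vertices of $O^{(i)}$ by $c(O^{(i)})$ minus the weight of its $i^\ast$-th vertex, use that this $i^\ast$-th vertex occupies a global position larger than $t$ and hence weighs at most $c(v_t)$, and substitute $c(O^{(i)}) \le c(\heaviestvertices{t}) + \delta_i$; a short computation then shows the resulting bound equals $c(v_t) + \tfrac{1}{\nicefrac{k}{2}}\sum_{j=1}^{t-1} c(v_j) + \tfrac{1}{\nicefrac{k}{2}}\delta_i$, which is the $t$-th term of $\afdbound{\nicefrac{k}{2}}{\heaviestvertices{t}}$ increased by $\tfrac{1}{\nicefrac{k}{2}}\delta_i$, hence at most $\afdbound{\nicefrac{k}{2}}{\heaviestvertices{t}} + \tfrac{1}{\nicefrac{k}{2}}\delta_i$. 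In both cases the desired inequality holds, and combining it with Condition~(ii) closes the argument for this $i$.

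The only step I expect to require care is the computation in the case $i^\ast > t$: one has to verify that after replacing the weight of the $i^\ast$-th vertex of $O^{(i)}$ by $c(v_t)$ and $c(O^{(i)})$ by $c(\heaviestvertices{t}) + \delta_i$, the convex combination $\bigl(1 - \tfrac{2}{k}\bigr)\,c(v_t) + \tfrac{2}{k}\bigl(c(\heaviestvertices{t}) + \delta_i\bigr)$ telescopes exactly into the $t$-th term of $\afdbound{\nicefrac{k}{2}}{\heaviestvertices{t}}$ plus $\tfrac{1}{\nicefrac{k}{2}}\delta_i$ --- intuitively, the extra weight $\delta_i$ by which $O^{(i)}$ may exceed $\heaviestvertices{t}$ in total is precisely the slack that Condition~(ii) keeps in reserve through the gap between $c(P_i)$ and $\max(\Prepacking)$. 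Everything else --- the reduction via Lemma~\ref{lemma:worstcaseaf}, the three facts, and the case $i^\ast \le t$ --- is routine. Finally, the argument, written here for even $k$, carries over verbatim to the general version of property $\mathcal{P}$ in Appendix~\ref{sec:generalized_balance_property}: one simply uses $\lceil k/2 \rceil$ and $\lfloor k/2 \rfloor$ in place of $\nicefrac{k}{2}$ when partitioning $\subhypergraph{V_1}$ and $\subhypergraph{V_2}$, respectively.
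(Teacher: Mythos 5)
Your proof is correct and takes essentially the same route as the paper: you use Condition~(i) to obtain a $\nicefrac{k}{2}$-way prepacking of each $P_i$ with block weights at most $\balancedconstraint{k}$, extend it by $\lpt$ and bound it via Lemma~\ref{lemma:worstcaseaf}, and the remaining inequality is exactly the paper's Claim~\ref{claim:balancepropertyclaim}, which you establish with the same algebra, merely inlining the paper's auxiliary Lemma~\ref{lemma:boundheaviestelements} into one case analysis on the maximizing index while carrying the slack $\max(\Prepacking)-c(P_i)$ along. The $t\ge 1$ caveat you flag is a harmless edge case that the paper also leaves implicit.
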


\begin{proof}
For convenience, we use $k' := \nicefrac{k}{2}$. Let $\DefPartition{2} = \{V_1,V_2\}$ be an abitrary balanced
bipartition that respects the prepacking
$\Prepacking = \{P_1, P_2\}$ with $\max{(\DefPartition{2})} \le \balancedconstraint{2}$.
Since \Prepacking~is deeply balanced (see Definition~\ref{def:balance_property}(i)),
there exists a $k'$-way prepacking $\Prepacking_{k'}$ of $P_1$ such that $\max(\Prepacking_{k'}) \le \balancedconstraint{k}$.
We define the sequence of the ordinary
vertices of block $V_1$ sorted in decreasing weight order with
$\nonprepackedvertices_1 := \langle v_1, \ldots, v_m~|~v_i \in V_1 \setminus P_1 \rangle$.
We can extend $\Prepacking_{k'}$ to a $k'$-way partition $\DefPartition{k'}$ of $V_1$ by assigning the vertices of
$\nonprepackedvertices_1$ to the blocks in $\Prepacking_{k'}$ using the \lpt~algorithm. Lemma~\ref{lemma:worstcaseaf}
then establishes an upper bound on the weight of the heaviest block.
\[\max(\DefPartition{k'})
  \stackrel{\text{Lemma~\ref{lemma:worstcaseaf}}}{\le}
  \max\{\frac{1}{k'}c(P_1) + \afdbound{k'}{\nonprepackedvertices_1}, \max(\Prepacking_{k'})\}
  \stackrel{\scriptscriptstyle \max(\Prepacking_{k'}) \le \balancedconstraint{k}}{\le} \max\{\frac{1}{k'}c(P_1) +
  \afdbound{k'}{\nonprepackedvertices_1}, \balancedconstraint{k}\}\]

Let $\nonprepackedvertices_t$ be the sequence of the $t$ heaviest ordinary vertices of $V \setminus P$
with $P := P_1 \cup P_2$ as defined in Definition~\ref{def:balance_property}.

\begin{restatable}{claim}{balancepropertyclaim}
\label{claim:balancepropertyclaim}
It holds that: $\frac{1}{k'}c(P_1) + \afdbound{k'}{\nonprepackedvertices_1} \le \frac{1}{k'}\max(\Prepacking) + \afdbound{k'}{\heaviestvertices{t}}$.
\end{restatable}

For a proof of Claim~\ref{claim:balancepropertyclaim} see Appendix~\ref{sec:heaviest_elements_proof}.
We can conclude that
\[\frac{1}{k'}c(P_1) + \afdbound{k'}{\nonprepackedvertices_1} \stackrel{\text{Claim~\ref{claim:balancepropertyclaim}}}{\le}
  \frac{1}{k'}\max(\Prepacking) + \afdbound{k'}{\heaviestvertices{t}} \stackrel{\text{Definition~\ref{def:balance_property}(ii)}}{\le}
  \balancedconstraint{k}. \]
This proves that the subhypergraph $\subhypergraph{V_1}$ is deeply balanced. The proof for block $V_2$ can be done analogously,
which then implies that $\DefPartition{2}$ is deeply balanced. Since \DefPartition{2}~is an abitrary
balanced bipartition respecting \Prepacking, it follows that \Prepacking~is sufficiently balanced.
\end{proof}

Algorithm~\ref{algo:prepacking} outlines our approach to efficiently compute a sufficiently balanced prepacking \Prepacking.
In Line~\ref{algo:deep_balance_1}, we compute a $k'$-way prepacking $\DefPrepacking{k'}$ of the $i$ heaviest vertices with the \lpt~algorithm
and if $\DefPrepacking{k'}$ satisfies $\max(\DefPrepacking{k'}) \le \balancedconstraint{k}$, then
Line~\ref{algo:deep_balance_2} constructs a deeply balanced prepacking $\Prepacking$ (which fullfils Condition~(i) of Definition~\ref{def:balance_property}).
We store the blocks $P'_j$ of $\Prepacking_{k'}$ together with their weights $c(P'_j)$ as key in an addressable priority queue such that we can
determine and update the block with the smallest weight in time $\Oh{\log{k'}}$ (Line~\ref{algo:deep_balance_1}).
In Line~\ref{algo:min_t}, we compute the smallest $t$ that satisfies $\max(\Prepacking) + c(\heaviestvertices{t}) \ge \balancedconstraint{2}$
via a binary search in logarithmic time over an array containing the vertex weight prefix sums of the sequence $\nonprepackedvertices$, which
can be precomputed in linear time. Furthermore, we construct a range maximum query data structure
over the array $H_{\nicefrac{k'}{2}} = \langle c(v_1), c(v_2) + \frac{1}{\nicefrac{k'}{2}} c(v_1), \ldots,
c(v_n) + \frac{1}{\nicefrac{k'}{2}} \sum_{j = 1}^{n-1}c(v_j) \rangle$. Caculating $\afdbound{\nicefrac{k'}{2}}{\heaviestvertices{t}}$
(Line~\ref{algo:rmq_hk}) then corresponds to a range maximum query in the interval $[i + 1, i + t]$ in $H_{\nicefrac{k'}{2}}$, which can
be answered in constant time after $H_{\nicefrac{k'}{2}}$ has been precomputed in time $\Oh{n}$~\cite{bender2000lca}.
In total, the running time of the algorithm is $\Oh{n(\log{k'} + \log{n})}$.
Note that if the algorithm reaches Line~\ref{algo:failed}, we could not proof that any of the intermediate constructed
prepackings were sufficiently balanced, in which case $\Prepacking$ represents a bipartition
of $\subhypergraph{V'}$ computed by the \lpt~algorithm.

\begin{algorithm}
  \SetKwFunction{Sufficient}{\small sufficientlyBalancedPrepacking}
  \SetKwProg{Fn}{Function}{:}{}
  \KwData{\small Hypergraph $H = (V,E,c,\omega)$ for which we seek an $\varepsilon$-balanced $k$-way
                 partition and subhypergraph $\subhypergraph{V'} = (V',E',c,\omega)$ of $H$ which is to be to bipartitioned
                 recursively into $k' \le k$ blocks.}
  \Fn{\Sufficient{$H$, $k$, $\varepsilon$ $\subhypergraph{V'}$, $k'$}} {
    $\Prepacking = \langle P_1, P_2 \rangle \gets \langle \emptyset, \emptyset \rangle$ and $\Prepacking_{k'} = \langle P'_1, \ldots, P'_{k'} \rangle \gets \langle \emptyset, \ldots, \emptyset \rangle$ \tcp*{\footnotesize Initialization}
    $\balancedconstraint{2} \gets (1 + \varepsilon') \lceil \frac{c(V')}{2} \rceil$ and $\balancedconstraint{k} \gets (1 + \varepsilon) \lceil \frac{c(V)}{k} \rceil$ \tcp*{\footnotesize with $\varepsilon'$ as defined in Equation~\ref{eq:adaptive_imbalance}}
    $\nonprepackedvertices \gets \langle v_1, \ldots, v_n~|~v_i \in V'\rangle$ \tcp*{\footnotesize $V'$ sorted in decreasing order of weight $\Rightarrow \Oh{n\log{n}}$}

    \For{$i = 1$ to $n$}{
      Add $v_i \in \nonprepackedvertices$ to bin $P'_{j} \in \Prepacking_{k'}$ with smallest weight \label{algo:deep_balance_1} \tcp*{\footnotesize \lpt~algorithm}
      $\Prepacking \gets \{P_1' \cup \ldots \cup P_{x}', P_{x + 1}' \cup \ldots \cup P_{k'}' \}$ with $x:= \lceil\frac{k'}{2} \rceil$ \label{algo:deep_balance_2} \;
      \If(\tcp*[f]{\footnotesize $\Rightarrow \Prepacking$ is deeply ($\varepsilon'$-)balanced}){$\max(\Prepacking) \le \balancedconstraint{2}$ \text{and} $\max(\Prepacking_{k'}) \le \balancedconstraint{k}$}{
        $t \gets \min(\{t~|~\max(\Prepacking) + c(\heaviestvertices{t}) \ge \balancedconstraint{2}\})$ \label{algo:min_t} \tcp*{\footnotesize $\heaviestvertices{t} := \langle v_{i + 1}, \ldots, v_{i + t} \rangle$}
        \If(\tcp*[f]{\footnotesize Condition~(ii) of Definition~\ref{def:balance_property}}){$\frac{2}{k'}\max(\Prepacking) + \afdbound{\nicefrac{k'}{2}}{\heaviestvertices{t}} \le \balancedconstraint{k}$ \label{algo:rmq_hk}}{
          \Return{\Prepacking} \tcp*{\footnotesize $\Rightarrow$ \Prepacking~is sufficiently balanced (Theorem~\ref{theorem:sufficiently_balanced})}
        }
      }
    }
    \Return{\Prepacking} \tcp*{\footnotesize No sufficiently balanced prepacking found $\Rightarrow$ treat all vertices as fixed vertices} \label{algo:failed}
  }
  \caption{Prepacking Algorithm}
  \label{algo:prepacking}
 \end{algorithm}

\section{Experimental Evaluation}
\label{sec:experiments}

We integrated the prepacking technique (see Algorithms~\ref{algo:recursive_bipartitioning}~and~\ref{algo:prepacking}) into the recursive bipartitioning algorithm of \Partitioner{KaHyPar}.
Our implementation is available from \url{http://www.kahypar.org}. The code is written in \Cpp{17} and compiled using \gpp{9.2} with the flags \texttt{-mtune=native -O3  -march=native}. Since \Partitioner{KaHyPar} offers both a recursive bipartitioning and direct $k$-way partitioning algorithm (which uses the RB algorithm in the initial partitioning phase), we refer to the
RB-version using our improvements as \Partitioner{KaHyPar-BP-R} and to the direct $k$-way version as \Partitioner{KaHyPar-BP-K} (\texttt{BP} $=$ \textbf{B}alanced \textbf{P}artitioning).

\subparagraph*{Instances.}
The following experimental evaluation is based on two benchmark sets.
The \InstanceType{RealWorld} benchmark set consists of 50 hypergraphs originating from the VLSI design and
scientific computing domain. It contains instances from the ISPD98 VLSI Circuit Benchmark Suite~\cite{ISPD98}
(18 instances),
the DAC 2012 Routability-Driven Placement Benchmark Suite~\cite{DAC2012} (9 instances),
16 instances from the Stanford Network Analysis (SNAP) Platform~\cite{SNAP}, and
7 highly asymmetric matrices of Davis et al.~\cite{AsymmetricSPM} (referred to as \InstanceType{ASM}).
For VLSI instances (\InstanceType{ISPD98} and \InstanceType{DAC}), we use the area
of a circuit element as the weight of its corresponding vertex.
We translate sparse matrices (\InstanceType{SNAP} and \InstanceType{ASM} instances) to hypergraphs using the row-net model~\cite{DBLP:journals/tpds/CatalyurekA99}
and use the degree of a vertex as its weight.
The vertex weight distributions of the individual instance types are depicted in Figure~\ref{fig:vertex_weights} in Appendix~\ref{sec:vertex_weight_distribution}.
\footnote{The benchmark sets and detailed statistics of their properties
are publicly available from \url{http://algo2.iti.kit.edu/heuer/sea21/}.}

Additionally, we generate ten \InstanceType{Artificial} instances that use the net structure of the ten largest \InstanceType{ISPD98} instances. Instead of using the area as weight, we assign new vertex weights that
yield instances for which it is difficult to satisfy the balance constraint:
Each vertex is assigned either unit weight or a weight chosen randomly from an uniform distribution in $[1, W] \subseteq \mathbb{N}_+$.
Both the probability that a vertex has non-unit weight and the parameter $W$ are determined (depending on the total number of vertices) such that
the expected number of vertices with non-unit weight is 120 and the expected total weight of these vertices is half the expected total weight of
the resulting hypergraph.

\subparagraph*{System and Methodology.}
All experiments are performed on a single core of a cluster with Intel Xeon Gold 6230 processors
running at $2.1$ GHz with $96$GB RAM. We compare \Partitioner{KaHyPar-BP-R} and \Partitioner{KaHyPar-BP-K} with
the latest recursive bipartitioning ($\Partitioner{KaHyPar-R}$) and direct $k$-way version (\Partitioner{KaHyPar-K}) of \Partitioner{KaHyPar}~\cite{KAHYPAR-HFC},
the default (\Partitioner{PaToH-D}) and quality preset (\Partitioner{PaToH-Q}) of \Partitioner{PaToH} 3.3~\cite{ccatalyurek1996decomposing},
as well as with the recursive bipartitioning (\Partitioner{hMetis-R}) and direct $k$-way version (\Partitioner{hMetis-K}) of
\Partitioner{hMetis} 2.0~\cite{DBLP:conf/dac/KarypisAKS97, DBLP:journals/vlsi/KarypisK00}. Details about the
choices of config parameters that influence partitioning quality or imbalance can be found in Appendix~\ref{sec:evaluated_partitioner_config}.

We perform experiments using $\splitatcommas{k \in \{2,4,8,16,32,64,128\}}$, $\splitatcommas{\varepsilon \in \{0.01, 0.03, 0.1\}}$,
ten repetitions using different seeds for each combination of $k$ and $\varepsilon$, and a time limit of eight hours. We call a combination of a hypergraph $H = (V,E,c,\omega)$,
$k$, and $\varepsilon$ an \emph{instance}. Before partitioning an instance, we remove all vertices $v \in V$ from $H$
with a weight greater than $\balancedconstraint{k} = (1 + \varepsilon)\lceil \frac{c(V)}{k} \rceil$ as proposed by
Caldwell et al.~\cite{CaldwellKM00} and adapt $k$ to $k' := k - |V_R|$, where $V_R$ represents
the set of removed vertices. We repeat that step recursively until there is no vertex with a weight greater than
$\balancedconstraint{k'} := (1 + \varepsilon)\lceil \frac{c(V \setminus V_R)}{k'} \rceil$. The input for each partitioner
is the subhypergraph $\subhypergraph{V \setminus V_R}$ of $H$ for which we compute a $k'$-way partition with
\specialbalancedconstraint{k'}{\lpt} as maximum allowed block weight.
Note that since all evaluated partitioners internally employ $\balancedconstraint{k'}$ as balance constraint,
we initialize each partitioner with a modified imbalance factor $\hat{\varepsilon}$ instead of $\varepsilon$ which is calculated as follows:
\begin{equation*}
\label{eq:modified_imbalance}
\balancedconstraint{k'} = (1 + \hat{\varepsilon}) \left\lceil \frac{c(V \setminus V_R)}{k'} \right\rceil = (1 + \varepsilon)\lpt(\subhypergraph{V \setminus V_R},k') = \specialbalancedconstraint{k'}{\lpt} \Rightarrow
\hat{\varepsilon} = \frac{\specialbalancedconstraint{k'}{\lpt}}{\lceil \frac{c(V \setminus V_R)}{k'} \rceil} - 1.
\end{equation*}
We consider the resulting $k'$-way partition $\DefPartition{k'}$ to be imbalanced, if it is not $\hat{\varepsilon}$-balanced.
Each partitioner optimizes the connectivity metric, which we also refer to as the quality of a partition.
Partition $\DefPartition{k'}$ can be extended to a $k$-way partition $\DefPartition{k}$ by
adding each of the removed vertices $v \in V_R$ to $\DefPartition{k}$ as a separate block.
Note that adding the removed vertices increases the connectivity metric of a $k'$-way
partition only by a constant value $\alpha \ge 0$.
Thus, we report the quality of $\DefPartition{k'}$, since $(\lambda - 1)(\DefPartition{k})$ will be always equal
to $(\lambda - 1)(\DefPartition{k'}) + \alpha$.

For each instance, we average quality and running times using the arithmetic
mean (over all seeds).
To further average over multiple instances, we use the geometric mean for absolute running times
to give each instance a comparable influence.
Runs with imbalanced partitions are not excluded from averaged running times.
If \textit{all ten runs} of a partitioner produced imbalanced partitions on an instance, we consider
the instance as \emph{imbalanced} and mark it with \ding{55} in the plots.

To compare the solution quality of different algorithms, we use \emph{performance profiles}~\cite{DBLP:journals/mp/DolanM02}.
Let $\mathcal{A}$ be the set of all algorithms we want to compare, $\mathcal{I}$ the set of instances, and $q_{A}(I)$ the quality of algorithm
$A \in \mathcal{A}$ on instance $I \in \mathcal{I}$.
For each algorithm $A$, we plot the fraction of instances ($y$-axis) for which $q_A(I) \leq \tau \cdot \min_{A' \in \mathcal{A}}q_{A'}(I)$, where $\tau$ is on the $x$-axis.
For $\tau = 1$, the $y$-value indicates the percentage of instances for which an algorithm $A \in \mathcal{A}$ performs best.
Note that these plots relate the quality of an algorithm to the best solution and thus do not permit a full ranking of three or more algorithms.

\subparagraph*{Balanced Partitioning.}
In Table~\ref{table:imbalanced_instances_per_type}, we report the percentage of imbalanced instances
produced by each partitioner for each instance type and $\varepsilon$. Both \Partitioner{KaHyPar-BP-K} and
\Partitioner{KaHyPar-BP-R} compute balanced partitions for all tested benchmark sets and parameters.  For the remaining partitioners, the number of
imbalanced solutions increases as the balance constraint becomes tighter.
For the previous KaHyPar versions, the number of imbalanced partitions is most pronounced on VLSI instances: For $\varepsilon = 0.01$,
\Partitioner{KaHyPar-K} and \Partitioner{KaHyPar-R} compute infeasible solutions for 6.3\% (10.3\%) of the \InstanceType{ISPD98} and for 9.5\% (19.0\%) of the \InstanceType{DAC} instances.
Comparing the distribution of vertex weights reveals that these instances tend to have a larger proportion of \emph{heavier} vertices
compared to the \InstanceType{ASM} and \InstanceType{SNAP} instances (see Figure~\ref{fig:vertex_weights} in Appendix~\ref{sec:vertex_weight_distribution}).
The largest benefit of using our approach can be observed on the artificially generated instances, where
\Partitioner{KaHyPar-K} and \Partitioner{KaHyPar-R} only computed balanced partitions for 72.9\% (71.4\%) of
the instances (for $\varepsilon = 0.01$).

\begin{table}[!t]
  \centering
  \caption{Percentage of instances for which all ten computed partitions were imbalanced.}
  \label{table:imbalanced_instances_per_type}
  \scalebox{0.82}{
  \begin{tabular}{r|rrr|rrr|rrr|rrr|rrr}
                                            & \multicolumn{3}{c|}{\InstanceType{ISPD98}} &
                                              \multicolumn{3}{c|}{\InstanceType{DAC}} &
                                              \multicolumn{3}{c|}{\InstanceType{ASM}} &
                                              \multicolumn{3}{c|}{\InstanceType{SNAP}} &
                                              \multicolumn{3}{c}{\InstanceType{Artificial}} \\
  $\varepsilon$                             & \scriptsize{0.01} & \scriptsize{0.03}  & \scriptsize{0.1} &
                                              \scriptsize{0.01} & \scriptsize{0.03}  & \scriptsize{0.1} &
                                              \scriptsize{0.01} & \scriptsize{0.03}  & \scriptsize{0.1} &
                                              \scriptsize{0.01} & \scriptsize{0.03}  & \scriptsize{0.1} &
                                              \scriptsize{0.01} & \scriptsize{0.03}  & \scriptsize{0.1} \\
  \midrule
  	% Columns: ispd, dac, asm, snap, artificial 
 \Partitioner{\Partitioner{KaHyPar-BP-K}} & 0.0 & 0.0 & 0.0 & 0.0 & 0.0 & 0.0 & 0.0 & 0.0 & 0.0 & 0.0 & 0.0 & 0.0 & 0.0 & 0.0 & 0.0 \tabularnewline
 \Partitioner{\Partitioner{KaHyPar-BP-R}} & 0.0 & 0.0 & 0.0 & 0.0 & 0.0 & 0.0 & 0.0 & 0.0 & 0.0 & 0.0 & 0.0 & 0.0 & 0.0 & 0.0 & 0.0 \tabularnewline
 \Partitioner{\Partitioner{KaHyPar-K}} & 6.3 & 5.6 & 0.8 & 9.5 & 7.9 & 6.3 & 4.1 & 4.1 & 2.0 & 0.9 & 0.9 & 0.0 & 27.1 & 22.9 & 11.4 \tabularnewline
 \Partitioner{\Partitioner{KaHyPar-R}} & 10.3 & 8.7 & 7.1 & 19.0 & 19.0 & 14.3 & 6.1 & 4.1 & 4.1 & 6.2 & 2.7 & 0.9 & 28.6 & 24.3 & 12.9 \tabularnewline
 \Partitioner{\Partitioner{hMetis-K}} & 43.7 & 22.2 & 9.5 & 33.3 & 22.2 & 11.1 & 67.3 & 32.7 & 4.1 & 33.9 & 20.5 & 3.6 & 51.4 & 38.6 & 24.3 \tabularnewline
 \Partitioner{\Partitioner{hMetis-R}} & 17.5 & 15.1 & 7.1 & 20.6 & 15.9 & 12.7 & 8.2 & 6.1 & 4.1 & 15.2 & 10.7 & 4.5 & 58.6 & 54.3 & 34.3 \tabularnewline
 \Partitioner{\Partitioner{PaToH-Q}} & 15.9 & 11.1 & 5.6 & 23.8 & 17.5 & 9.5 & 24.5 & 6.1 & 4.1 & 33.9 & 8.0 & 1.8 & 31.4 & 24.3 & 14.3 \tabularnewline
 \Partitioner{\Partitioner{PaToH-D}} & 9.5 & 7.9 & 3.2 & 20.6 & 17.5 & 9.5 & 28.6 & 6.1 & 4.1 & 22.3 & 11.6 & 2.7 & 20.0 & 15.7 & 8.6 \tabularnewline

  \end{tabular}
  }
\end{table}

\begin{table}[!t]
  \centering
  \caption{
  Occurrence of prepacked vertices (i.e., vertices that are fixed to a specific block during partitioning) for each combination of $k$ and $\varepsilon$ when using \Partitioner{KaHyPar-BP-R} on \InstanceType{RealWorld} instances:
  Minimum/average/maximum percentage of prepacked vertices (left), and percentage of instances for which the prepacking is executed at least once (right).
  }
  \label{table:prepacking_stats}
  \scalebox{0.95}{
  \begin{tabular}{r|rrrrrrrrr||rrr}
       &  \multicolumn{3}{c}{$\varepsilon = 0.01$} & \multicolumn{3}{c}{$\varepsilon = 0.03$} & \multicolumn{3}{c||}{$\varepsilon = 0.1$} & \multicolumn{3}{c}{Prepacking Triggered} \\
  $k$  &   \scriptsize{Min} & \scriptsize{Avg} & \scriptsize{Max} &
          \scriptsize{Min} & \scriptsize{Avg} & \scriptsize{Max} &
          \scriptsize{Min} & \scriptsize{Avg} & \scriptsize{Max} &
          \scriptsize{$\varepsilon = 0.01$} & \scriptsize{$\varepsilon = 0.03$} & \scriptsize{$\varepsilon = 0.1$} \\
  \midrule
  	2 & - & - & - & - & - & - & - & - & - & - & - & - \tabularnewline
4 & - & - & - & - & - & - & - & - & - & - & - & - \tabularnewline
8 & $\scriptscriptstyle \le~0.1$ & $\scriptscriptstyle \le~0.1$ & 0.2 & $\scriptscriptstyle \le~0.1$ & $\scriptscriptstyle \le~0.1$ & $\scriptscriptstyle \le~0.1$ & - & - & - & 5.0 & 1.7 & - \tabularnewline
16 & $\scriptscriptstyle \le~0.1$ & $\scriptscriptstyle \le~0.1$ & $\scriptscriptstyle \le~0.1$ & $\scriptscriptstyle \le~0.1$ & $\scriptscriptstyle \le~0.1$ & $\scriptscriptstyle \le~0.1$ & $\scriptscriptstyle \le~0.1$ & $\scriptscriptstyle \le~0.1$ & $\scriptscriptstyle \le~0.1$ & 8.3 & 5.0 & 3.3 \tabularnewline
32 & $\scriptscriptstyle \le~0.1$ & 8.1 & 59.0 & $\scriptscriptstyle \le~0.1$ & 6.2 & 68.4 & $\scriptscriptstyle \le~0.1$ & 1.9 & 14.7 & 20.0 & 18.3 & 10.0 \tabularnewline
64 & $\scriptscriptstyle \le~0.1$ & 23.2 & 87.7 & $\scriptscriptstyle \le~0.1$ & 17.3 & 90.9 & $\scriptscriptstyle \le~0.1$ & 2.7 & 35.9 & 18.3 & 13.3 & 10.0 \tabularnewline
128 & $\scriptscriptstyle \le~0.1$ & 67.9 & 100.0 & $\scriptscriptstyle \le~0.1$ & 42.0 & 96.3 & $\scriptscriptstyle \le~0.1$ & 15.4 & 97.0 & 26.7 & 20.0 & 15.0 \tabularnewline

  \end{tabular}
  }
\end{table}

With some notable exceptions, the number of imbalanced partitions of both variants of \Partitioner{PaToH} and \Partitioner{hMetis-R} is comparable to that of \Partitioner{KaHyPar-R}: \Partitioner{PaToH} computes significantly fewer feasible solutions on sparse matrix
instances (\InstanceType{ASM} and \InstanceType{SNAP}) for $\varepsilon = 0.01$, while \Partitioner{hMetis-R} performs considerably worse on the \InstanceType{Artificial} benchmark set. Out of all partitioners, \Partitioner{hMetis-K} yields the most imbalanced instances across all benchmark sets.
As can be seen in Table~\ref{table:partitioner_stats} in Appendix~\ref{sec:imbalance_per_type}, the number of imbalanced partitions produced by each competing partitioner increases with deceasing $\varepsilon$ and
increasing $k$.

Table~\ref{table:prepacking_stats} shows (i) how often our prepacking algorithm is
triggered at least once in \Partitioner{KaHyPar-BP-R} (see Line~\ref{algo:prepacking_triggered} in Algorithm~\ref{algo:recursive_bipartitioning})
and (ii) the percentage of vertices that are treated as fixed vertices
(see Table~\ref{table:prepacking_stats_kahypar_bp_k} in Appendix~\ref{sec:prepacking_stats_kahypar_bp_k} for the results of \Partitioner{KaHyPar-BP-K}).
Except for $k = 128$, on average less than 25\% of the vertices are treated as fixed vertices (even less than 10\% for $k < 64$),
which provides sufficient flexibility to optimize the connectivity objective on the remaining ordinary vertices.
However, in a few cases there are also runs where almost all vertices are added to the prepacking.
As expected, the triggering frequency and the percentage of fixed vertices increases for larger values of $k$ and smaller $\varepsilon$.

\subparagraph*{Quality and Running Times.}
Comparing the different KaHyPar configurations in Figure~\ref{fig:quality} (left), we can see that our new configurations
provide the same solution quality as their non-prepacking counterparts. Furthermore, we see that, in general,
the direct $k$-way algorithm still performs better than its RB counterpart~\cite{KAHYPAR-DIS}.
Figure~\ref{fig:quality} (middle) therefore compares the strongest configuration \Partitioner{KaHyPar-BP-K} with \Partitioner{PaToH} and \Partitioner{hMetis}. We see that  \Partitioner{KaHyPar-BP-K} performs considerably better than the competitors.
If we compare \Partitioner{KaHyPar-BP-K} with each partitioner individually on the \InstanceType{RealWorld} benchmark set,
\Partitioner{KaHyPar-BP-K} produces partitions with higher quality than those of \Partitioner{KaHyPar-K}, \Partitioner{KaHyPar-BP-R},
\Partitioner{KaHyPar-R}, \Partitioner{hMetis-R}, \Partitioner{hMetis-K}, \Partitioner{PaToH-Q}
and \Partitioner{PaToH-D} on 48.9\%, 70.2\%, 73.2\%, 76.4\%, 84.3\%, 92.9\% and 97.9\% of the instances, respectively.
\Partitioner{KaHyPar-BP-K} outperforms \Partitioner{KaHyPar-BP-R} on the \InstanceType{RealWorld} benchmark set. On artificial instances,
both algorithms produce partitions with comparable quality for $\varepsilon = \{0.01, 0.03\}$, while the results
are less clear for $\varepsilon = 0.1$ (see Figure~\ref{fig:quality} (right), as well as
Figures~\ref{fig:quality_epsilon_real_world}~and~\ref{fig:quality_epsilon_artificial} in Appendix~\ref{sec:quality_comparison}).

The running time plots (see Figure~\ref{fig:running_time_real_world}~and~\ref{fig:running_time_artificial} in Appendix~\ref{sec:running_time}) show that our new approach
does not impose any additional overheads in \Partitioner{KaHyPar}. On average, \Partitioner{KaHyPar-BP-K} is slightly faster than
\Partitioner{KaHyPar-K} as our new algorithm has replaced the previous balancing strategy in \Partitioner{KaHyPar}
(restarting the bipartition with an tighter bound on the weight of the heaviest block if the bipartition is imbalanced).
The running time difference is less pronounced for \Partitioner{KaHyPar-BP-R} and \Partitioner{KaHyPar-R}. This can be explained
by the fact that, in  \Partitioner{KaHyPar-BP-R}, our prepacking algorithm is executed on the input hypergraph, whereas it is executed
on the coarsest hypergraph in \Partitioner{KaHyPar-BP-K}.

\begin{figure*}[!t]
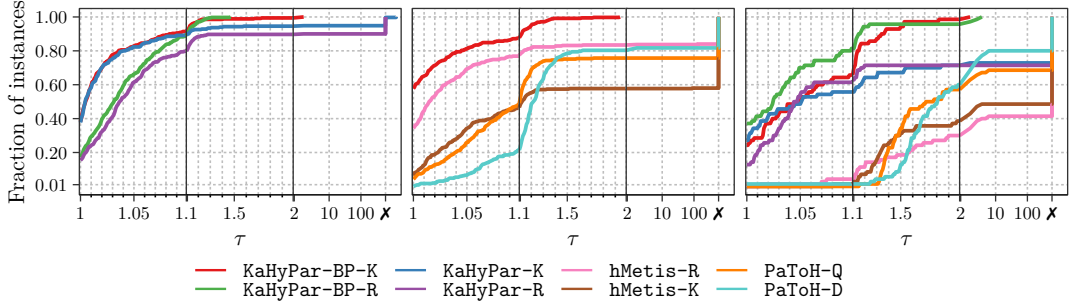

  \hspace{-0.25cm}
  \begin{minipage}{.365\textwidth}
  	\externalizedfigure{0}{plots/realworld_e0.01_kahypar}
  \end{minipage} %
  \begin{minipage}{.305\textwidth}
  	\externalizedfigure{1}{plots/realworld_e0.01_comparison}
  \end{minipage} %
  \begin{minipage}{.305\textwidth}
    \externalizedfigure{2}{plots/artificial_e0.01}
  \end{minipage} %
	\vfill
	\vspace{-0.4cm}
  \begin{minipage}{\textwidth}
    \externalizedfigure{3}{plots/legend}
  \end{minipage} %
  \vspace{-0.25cm}
  \caption{Performance profiles comparing the solution quality of \Partitioner{KaHyPar-BP-K} and
           \Partitioner{KaHyPar-BP-R} with \Partitioner{KaHyPar-K} (left), \Partitioner{KaHyPar-R} (left),
           \Partitioner{PaToH} (middle), and \Partitioner{hMetis} (middle) on our \InstanceType{RealWorld}
           benchmark set, and with all systems on our \InstanceType{Artificial} benchmark set (right)
          ($\varepsilon = 0.01$).}
  \label{fig:quality}
  \vspace{-0.5cm}
\end{figure*}

\section{Conclusion and Future Work}
\label{sec:conclusion}

In this work, we revisited the problem of computing balanced partitions for weighted hypergraphs in the
multilevel setting and showed that many state-of-the-art hypergraph partitioners struggle to find balanced solutions
on hypergraphs with weighted vertices -- especially for tight balance constraints.
We therefore developed an algorithm that enables partitioners based on the recursive bipartitioning
scheme to reliably compute balanced partitions. The method is based on the concept of
\emph{deeply balanced} bipartitions and is implemented by pre-assigning a small subset of the heaviest vertices
to the two blocks of each bipartiton. For this pre-assignment, we established a property that
can be verified in polynomial time and, if fulfilled, leads to provable balance guarantees for the resulting
$k$-way partition. We integrated the approach into the recursive bipartitioning algorithm of \Partitioner{KaHyPar}.
Our new algorithms  \Partitioner{KaHyPar-BP-K} and  \Partitioner{KaHyPar-BP-R} are capable of computing balanced solutions on all instances of a diverse benchmark set, without negatively affecting the solution quality or running time
of \Partitioner{KaHyPar}.

Interesting opportunities for future research include replacing the \lpt~algorithm with
an algorithm that additionally optimizes the partitioning objective to construct sufficiently balanced prepackings
with improved solution quality~\cite{MaasBA}, and integrating rebalancing strategies
similar to the techniques proposed for non-multilevel
partitioners~\cite{DBLP:conf/aspdac/CaldwellKM00, CaldwellKM00, DBLP:conf/iccad/DuttT97} into multilevel refinement algorithms.

\bibliography{multilevel_vertex_weights}

\appendix

\section{Proof of Lemma~\ref{lemma:worstcaseaf}}
\label{sec:af_proof}

\worstcaseaf*

\begin{proof}
We define $\DefPrepacking{k} := \{P_1, \ldots, P_k\}$ and $\DefPartition{k} := \{V_1,\ldots,V_k\}$.
Let assume that the \lpt~algorithm assigned the $i$-th vertex $v_i$ of $\nonprepackedvertices$
to block $V_j \in \DefPartition{k}$. We define $\subpartition{j}{i}$ as a subset of block $V_j$ that only contains vertices of
$\langle v_1, \ldots, v_i \rangle \subseteq \nonprepackedvertices$ and $\fixedvertices$. Since the \lpt~algorithm always assigns
an vertex to a block with the smallest weight (see Section~\ref{sec:related_work}), the weight of $\subpartition{j}{i - 1}$
must be smaller or equal to $\frac{1}{k}(c(\fixedvertices) + \sum_{j = 1}^{i - 1} c(v_j))$ (average weight of all previously
assigned vertices), otherwise $\subpartition{j}{i - 1}$ would be not the block with the smallest weight.
\begin{align*}
\Rightarrow c(\subpartition{j}{i}) = c(\subpartition{j}{i-1}) + c(v_i)
  \le \frac{1}{k}(c(\fixedvertices) + \sum_{j = 1}^{i - 1} c(v_j)) + c(v_i) \le \frac{1}{k}c(\fixedvertices) + \afdbound{k}{\nonprepackedvertices}
\end{align*}
We can establish an upper bound on the weight of all blocks to which the \lpt~algorithm assigns an vertex to
with $\frac{1}{k}c(\fixedvertices) + \afdbound{k}{\nonprepackedvertices}$.
If the \lpt~algorithm does not assign any vertex to a block $V_j \in \DefPartition{k}$, its weight is equal to $c(P_j) \le \max(\DefPrepacking{k})$.
\[\Rightarrow \max(\DefPartition{k}) \le \max\{\frac{1}{k}c(\fixedvertices) + \afdbound{k}{\nonprepackedvertices}, \max(\DefPrepacking{k}) \}\]
\end{proof}

\section{Generalized Balance Property}
\label{sec:generalized_balance_property}

\begin{definition}{\definitionname{Generalized Balance Property}}
  \label{def:generalized_balance_property}
  Let $H = (V,E,c,\omega)$ be a hypergraph for which we want to compute an $\varepsilon$-balanced $k$-way
  partition and $\Prepacking := \{P_1, P_2\}$ be a prepacking of $H$ for a set of fixed vertices
  $\fixedvertices \subseteq V$. Furthermore, let  $\heaviestvertices{t_1}$ resp.~$\heaviestvertices{t_2}$ be the sequence of
  the $t_1$ resp.~$t_2$ heaviest ordinary vertices of $V \setminus \fixedvertices$ sorted in decreasing
  vertex weight order such that $t_1$ resp.~$t_2$ is the smallest number that satisfies
  $c(P_1) + c(\heaviestvertices{t_1}) \ge \balancedconstraint{2}$ resp. $c(P_2) + c(\heaviestvertices{t_2}) \ge \balancedconstraint{2}$
  (see Line~\ref{algo:balance_constraint}, Algorithm~\ref{algo:recursive_bipartitioning}).
  We say that a prepacking $\Prepacking$ satisfies the balance property with respect to $k$ if the following conditions hold:
  \begin{enumerate}
    \item[(i)] \Prepacking~is deeply balanced
    \item[(ii)] $\frac{1}{k_1}c(P_1) + \afdbound{k_1}{\heaviestvertices{t_1}} \le \balancedconstraint{k}$ with $k_1 := \lceil \frac{k}{2} \rceil$
    \item[(iii)] $\frac{1}{k_2}c(P_2) + \afdbound{k_2}{\heaviestvertices{t_2}} \le \balancedconstraint{k}$ with $k_2 := \lfloor \frac{k}{2} \rfloor$
  \end{enumerate}
\end{definition}

The proof of Theorem~\ref{theorem:sufficiently_balanced} can be adapted such that we show that there exist a $k_1$- resp.~ $k_2$-way partition
$\DefPartition{k_1}$ resp.~$\DefPartition{k_2}$ for $V_1$ resp.~$V_2$ of any balanced bipartition $\DefPartition{2} := \{V_1,V_2\}$ that respects
the prepacking \Prepacking~with $\max(\DefPartition{k_1}) \le \frac{1}{k_1}c(P_1) + \afdbound{k_1}{\heaviestvertices{t_1}} \le \balancedconstraint{k}$ (Defintion (ii))
and $\max(\DefPartition{k_2}) \le \frac{1}{k_2}c(P_2) + \afdbound{k_2}{\heaviestvertices{t_2}} \le \balancedconstraint{k}$ (Defintion (iii)).

\section{Proof of Claim~\ref{claim:balancepropertyclaim}}
\label{sec:heaviest_elements_proof}

\begin{lemma}
  \label{lemma:boundheaviestelements}
  Let $L = \langle a_1, \ldots, a_n \rangle$ be a sequence of elements sorted in decreasing weight order
  with respect to a weight function $\loadbalancingweight: L \rightarrow \mathbb{R}_{\ge 0}$
  (for a subsequence $A := \langle a_1, \ldots, a_l \rangle$ of $L$, we define $c(A) := \sum_{i = 1}^{l} c(a_i)$),
  $L'$ be an abitrary subsequence of $L$ sorted in decreasing weight order and
  $L_m = \langle a_1, \ldots, a_m \rangle$ the subsequence of the $m \le n$ heaviest elements in $L$.
  Then the following conditions hold:
  \begin{enumerate}
    \item[(i)] If $\loadbalancingweight(L') \le \loadbalancingweight(L_m)$, then $\afdbound{k}{L'} \le \afdbound{k}{L_m}$
    \item[(ii)] If $\loadbalancingweight(L') > \loadbalancingweight(L_m)$, then $\afdbound{k}{L'} - \frac{1}{k}\loadbalancingweight(L') \le \afdbound{k}{L_m} - \frac{1}{k}\loadbalancingweight(L_m) $
  \end{enumerate}
\end{lemma}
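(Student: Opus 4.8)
I would reduce the whole lemma to a single structural fact about subsequences, combined with a case split on the index attaining the relevant maximum. Write $L' = \langle a_{i_1}, \ldots, a_{i_l}\rangle$ with $i_1 < \cdots < i_l$; since the $i_r$ are distinct positive integers we have $i_r \ge r$, and since $c$ is non-increasing along $L$ this yields the key inequality $c(a_{i_r}) \le c(a_r)$ for every $r$ — this does almost all of the work. I would also record the equivalent ``slope-corrected'' form $\afdbound{k}{B} - \tfrac{1}{k}c(B) = \max_{1 \le s \le |B|}\bigl((1-\tfrac{1}{k})c(b_s) - \tfrac{1}{k}\sum_{j>s}c(b_j)\bigr)$, obtained by pulling $\tfrac{1}{k}c(B)$ inside the maximum defining $\afdbound{k}{B} = \max_{s}\bigl(c(b_s) + \tfrac{1}{k}\sum_{j<s}c(b_j)\bigr)$, and I use $1 - \tfrac{1}{k} \ge 0$. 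In each part I let $s^{\ast}$ attain the relevant maximum for $L'$ and split on whether $s^{\ast} \le m$ or $s^{\ast} > m$; the cases where $L'$ or $L_m$ is empty are trivial, so assume $m \ge 1$.

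For part~(i), assume $c(L') \le c(L_m)$. If $s^{\ast} \le m$, then $c(a_{i_{s^{\ast}}}) \le c(a_{s^{\ast}})$ and $\sum_{r<s^{\ast}}c(a_{i_r}) \le \sum_{r<s^{\ast}}c(a_r)$, so $\afdbound{k}{L'}$ is at most the $s^{\ast}$-th term of the maximum defining $\afdbound{k}{L_m}$. If $s^{\ast} > m$, then $c(a_{i_{s^{\ast}}}) \le c(a_m)$, and using $\sum_{r<s^{\ast}}c(a_{i_r}) \le c(L') - c(a_{i_{s^{\ast}}})$ together with $c(L') \le c(L_m)$ gives $\afdbound{k}{L'} \le (1-\tfrac{1}{k})c(a_m) + \tfrac{1}{k}c(L_m)$, which is exactly the $m$-th term of the maximum for $\afdbound{k}{L_m}$. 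Part~(ii), under $c(L') > c(L_m)$, is the mirror image on the slope-corrected quantities. If $s^{\ast} > m$, dropping the non-negative tail and using $c(a_{i_{s^{\ast}}}) \le c(a_m)$ gives $\afdbound{k}{L'} - \tfrac{1}{k}c(L') \le (1-\tfrac{1}{k})c(a_m)$, the $m$-th slope-corrected term for $L_m$. If $s^{\ast} \le m$, I instead bound the tail from below, $\sum_{r>s^{\ast}}c(a_{i_r}) = c(L') - \sum_{r\le s^{\ast}}c(a_{i_r}) \ge c(L') - \sum_{r\le s^{\ast}}c(a_r) > c(L_m) - \sum_{r\le s^{\ast}}c(a_r) = \sum_{s^{\ast}<j\le m}c(a_j)$ (using both $c(a_{i_r}) \le c(a_r)$ and $c(L') > c(L_m)$), and combine with $c(a_{i_{s^{\ast}}}) \le c(a_{s^{\ast}})$ to recover the $s^{\ast}$-th slope-corrected term for $L_m$.

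The main obstacle is recognizing which quantity is monotone in which regime — this is precisely why the lemma splits into two cases: $\afdbound{k}{\cdot}$ can be dominated by $\afdbound{k}{L_m}$ only while $L'$ is no heavier than $L_m$, and once $L'$ outweighs $L_m$ one is forced to compare the slope-corrected $\afdbound{k}{\cdot} - \tfrac{1}{k}c(\cdot)$ instead. Everything else is routine bookkeeping around $c(a_{i_r}) \le c(a_r)$, plus the observation that the boundary case $s^{\ast} > m$ is always absorbed into the last ($m$-th) term of the bound for $L_m$. Finally, the lemma immediately yields Claim~\ref{claim:balancepropertyclaim} by taking $L := \nonprepackedvertices$ (all ordinary vertices), $L' := \nonprepackedvertices_1$ (ordinary vertices of $V_1$), and $L_m := \heaviestvertices{t}$: if $c(\nonprepackedvertices_1) \le c(\heaviestvertices{t})$ use part~(i) and add $\tfrac{1}{k'}c(P_1) \le \tfrac{1}{k'}\max(\Prepacking)$; otherwise use part~(ii), which gives $\afdbound{k'}{\nonprepackedvertices_1} + \tfrac{1}{k'}c(P_1) \le \afdbound{k'}{\heaviestvertices{t}} + \tfrac{1}{k'}\bigl(c(V_1) - c(\heaviestvertices{t})\bigr)$, and $c(V_1) - c(\heaviestvertices{t}) \le \balancedconstraint{2} - c(\heaviestvertices{t}) \le \max(\Prepacking)$ by the balance of $\DefPartition{2}$ and the choice of $t$ with $\max(\Prepacking) + c(\heaviestvertices{t}) \ge \balancedconstraint{2}$.
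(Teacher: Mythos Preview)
Your proof is correct and follows essentially the same route as the paper: the key inequality $c(a_{i_r}) \le c(a_r)$, the case split on whether the maximising index is at most $m$ or exceeds $m$, and the absorption of the $s^\ast > m$ case into the $m$-th term of $\afdbound{k}{L_m}$ all match. The one place you diverge is part~(ii) with $s^\ast \le m$: the paper simply reuses the bound $\afdbound{k}{L'} \le \afdbound{k}{L_m}$ already obtained in the $i \le m$ case and observes that subtracting $\tfrac{1}{k}c(L') > \tfrac{1}{k}c(L_m)$ from the left preserves the inequality, whereas you argue directly on the slope-corrected terms via the tail lower bound $\sum_{r>s^\ast}c(a_{i_r}) > \sum_{s^\ast < j \le m}c(a_j)$. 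Both are valid; the paper's shortcut is slightly more economical, while your version makes the two parts more symmetric.
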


\begin{proof}
For convenience, we define $L' := \langle b_1, \ldots, b_l \rangle$.
Note that $\forall i \in \{1, \ldots, \min(m,l) \}: \loadbalancingweight(a_i) \ge \loadbalancingweight(b_i)$,
since $L_m$ contains the $m$ heaviest elements in decreasing order.
We define $i := \arg\max_{i \in \{1,\ldots,l\}} \loadbalancingweight(b_i) + \frac{1}{k}\sum_{j = 1}^{i - 1} \loadbalancingweight(b_j)$
(index that maximizes $\afdbound{k}{L'}$).

(i) + (ii): If $i \le m$, then
\[\afdbound{k}{L'} =
  \loadbalancingweight(b_i)  + \frac{1}{k} \sum_{j = 1}^{i - 1} \loadbalancingweight(b_j)
   \stackrel{\forall j \in [1,i]:~\loadbalancingweight(b_j) \le \loadbalancingweight(a_j)}{\le}
   \loadbalancingweight(a_i)  + \frac{1}{k} \sum_{j = 1}^{i - 1} \loadbalancingweight(a_j)
   \le \afdbound{k}{L_m}
\]

(i): If $m < i \le l$, then
\begin{align*}
  \afdbound{k}{L'} = \loadbalancingweight(b_i)  + \frac{1}{k} \sum_{j = 1}^{i - 1} \loadbalancingweight(b_j) =
   \loadbalancingweight(b_i) - \frac{1}{k} \sum_{j = i}^{n} \loadbalancingweight(b_j) + \frac{1}{k}\loadbalancingweight(L')
   \le \left( 1 - \frac{1}{k} \right)\loadbalancingweight(b_i) + \frac{1}{k}\loadbalancingweight(L') \\
   \stackrel{\substack{c(b_i) \le c(a_m) \\ c(L') \le c(L_m)}}{\le} \left( 1 - \frac{1}{k} \right)\loadbalancingweight(a_m) + \frac{1}{k}\loadbalancingweight(L_m)
   = \loadbalancingweight(a_m) + \frac{1}{k} \sum_{j = 1}^{m - 1} \loadbalancingweight(a_j)
   \le \afdbound{k}{L_m}
\end{align*}

(ii): If $m < i \le l$, then
\begin{align*}
  \afdbound{k}{L'} - \frac{1}{k}\loadbalancingweight(L') =
  \loadbalancingweight(b_i)  + \frac{1}{k} \sum_{j = 1}^{i - 1} \loadbalancingweight(b_j) - \frac{1}{k}\loadbalancingweight(L') =
  \loadbalancingweight(b_i) - \frac{1}{k} \sum_{l = i}^{n} \loadbalancingweight(b_l)
  \le \left( 1 - \frac{1}{k} \right)\loadbalancingweight(b_i) \\
  \stackrel{c(b_i) \le c(a_m)}{\le} \left( 1 - \frac{1}{k} \right)\loadbalancingweight(a_m)
  = \loadbalancingweight(a_m) + \frac{1}{k} \sum_{j = 1}^{m - 1} \loadbalancingweight(a_j) - \frac{1}{k}\loadbalancingweight(L_m)
  \le \afdbound{k}{L_m} - \frac{1}{k}\loadbalancingweight(L_m)
\end{align*}
\end{proof}

\balancepropertyclaim*

\begin{proof}
  Remember, $\Prepacking = \{P_1, P_2\}$, $\DefPartition{2} = \{V_1, V_2\}$ with $P_1 \subseteq V_1$ and $P_2 \subseteq V_2$, $\nonprepackedvertices_1$ is equal to
  $V_1 \setminus P_1$ and $\heaviestvertices{t}$ represents the $t$ heaviest vertices of $(V_1 \cup V_2) \setminus (P_1 \cup P_2)$ with
  $\max(\Prepacking) + c(\heaviestvertices{t}) \ge \balancedconstraint{2}$
  as defined in Definition~\ref{def:balance_property}.
  The following proof distingush two cases based on Lemma~\ref{lemma:boundheaviestelements}.

  If $c(\nonprepackedvertices_1) \le c(\heaviestvertices{t})$, then
  \begin{align*}
    \frac{1}{k'}c(P_1) + \afdbound{k'}{\nonprepackedvertices_1}
    \stackrel{\text{Lemma~\ref{lemma:boundheaviestelements}(i)}}{\le} \frac{1}{k'}c(P_1) + \afdbound{k'}{\heaviestvertices{t}}
    \stackrel{c(P_1) \le \max(\Prepacking)}{\le} \frac{1}{k'} \max(\Prepacking)  + \afdbound{k'}{\heaviestvertices{t}}
  \end{align*}

  If $c(\nonprepackedvertices_1) > c(\heaviestvertices{t})$, then
  \begin{align*}
    \frac{1}{k'}c(P_1) + \afdbound{k'}{\nonprepackedvertices_1}
    = \frac{1}{k'}c(P_1) + \afdbound{k'}{\nonprepackedvertices_1} - \frac{1}{k'}c(\nonprepackedvertices_1) +  \frac{1}{k'}c(\nonprepackedvertices_1) \\
    \stackrel{\text{Lemma~\ref{lemma:boundheaviestelements}(ii)}}{\le} \frac{1}{k'}(c(P_1) + c(\nonprepackedvertices_1))  + \afdbound{k'}{\heaviestvertices{t}} - \frac{1}{k'}c(\heaviestvertices{t})
    \stackrel{c(P_1) + c(\nonprepackedvertices_1) = c(V_1)}{=} \frac{1}{k'}(c(V_1) - c(\heaviestvertices{t}))  + \afdbound{k'}{\heaviestvertices{t}} \\
    \stackrel{c(V_1) \le \balancedconstraint{2}}{\le} \frac{1}{k'} ( \balancedconstraint{2} - c(\heaviestvertices{t}) ) + \afdbound{k'}{\heaviestvertices{t}}
    \stackrel{\max(\Prepacking) + c(\heaviestvertices{t}) \ge \balancedconstraint{2}}{\le} \frac{1}{k'} \max(\Prepacking)  + \afdbound{k'}{\heaviestvertices{t}}
  \end{align*}
\end{proof}

\section{Vertex Weight Distributions}
\label{sec:vertex_weight_distribution}

\begin{figure*}[!htb]
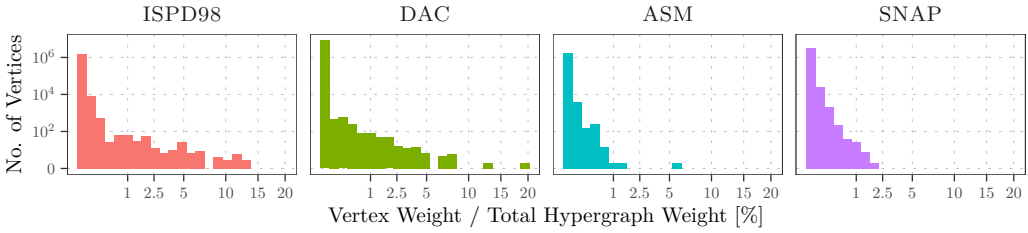

  \vspace{-1cm}
	\externalizedfigure{12}{stats/instance_weight_distributions}
	\vspace{-0.3cm}
  \caption{Overview of the vertex weight distribution for each instance type.
           The histograms (bin width $= 0.2$) show the number of vertices (y-axis) with a certain share on the total
           weight of its correponding hypergraph (x-axis). The share of a vertex $v \in V$ on the total weight of
           a weighted hypergraph $H = (V,E,c,\omega)$ is $c(v) / c(V)$.}
  \label{fig:vertex_weights}
	\vspace{-0.35cm}
\end{figure*}

\section{Configuration of Evaluated Partitioners}
\label{sec:evaluated_partitioner_config}

\Partitioner{hMetis} does not directly optimize the $(\lambda - 1)$-metric. Instead it optimizes the \emph{sum-of-external-degrees} (SOED),
which is closely related to the connectivity metric: $(\lambda - 1)(\Partition) = \mathrm{SOED}(\Partition) - \mathrm{cut}(\Partition)$.
We therefore configure \Partitioner{hMetis} to optimize SOED and calculate the $(\lambda - 1)$-metric accordingly.
The same approach is also used by the authors of \Partitioner{hMetis}~\cite{DBLP:journals/vlsi/KarypisK00}.
Additionally, \Partitioner{hMetis-R} defines the maximum allowed imbalance of a partition differently~\cite{metismanual}.
For example, an imbalance value of 5 means that a block weight between $0.45 \cdot c(V)$ and $0.55 \cdot c(V)$ is allowed at each bisection step.
We therefore translate the imbalance parameter $\varepsilon$ to a modified parameter $\varepsilon'$ such that the correct allowed
block weight is matched after $\log_2(k)$ bisections:
\[\varepsilon' := 100 \cdot \left( \left( (1 + \varepsilon) \frac{\lceil \frac{c(V)}{k} \rceil}{c(V)} \right)^{\frac{1}{\log_2(k)}} - 0.5 \right)\]

\Partitioner{PaToH} is evaluated with both the default (\Partitioner{PaToH-D}) and the quality preset (\Partitioner{PaToH-Q}).
However, there are also more fine-grained parameters available for \Partitioner{PaToH} as described in~\cite{PaToHManual}.
In our case, the \texttt{balance} parameter is of special interest as it might affect the ability of \Partitioner{PaToH} to find a balanced partition.
Therefore, we evaluated the performance of \Partitioner{PaToH} on our benchmark set with each of the possible options \texttt{Strict}, \texttt{Adaptive} and \texttt{Relaxed}.
The configuration using the \texttt{Strict} option (which is also the default) consistently produced fewest imbalanced partitions and had similar quality to the other configurations.
Consequently, we only report the results of this configuration.

\section{Number of Imbalanced Partitions per $k$ and $\varepsilon$}
\label{sec:imbalance_per_type}

\begin{table}[H]
  \centering
  \caption{Percentage of imbalanced instances produced by each partitioner on our \InstanceType{RealWorld} benchmark set for each combination of $k$ and $\varepsilon$.}
  \label{table:partitioner_stats}
  \begin{tabular}{r|rrrrrrrrr|rrr}
  %                                          & \multicolumn{12}{c}{Number of Imbalanced Partitions [\%]} \\
                                            & \multicolumn{3}{c}{\footnotesize{$k \in \{2,4,8\}$}} &
                                              \multicolumn{3}{c}{\footnotesize{$k \in \{16, 32\}$}} &
                                              \multicolumn{3}{c|}{\footnotesize{$k \in \{64,128\}$}} &
                                              \multicolumn{3}{c}{\footnotesize{Total [\%]}} \\
  $\varepsilon$                             & \scriptsize{0.01} & \scriptsize{0.03}  & \scriptsize{0.1} &
                                              \scriptsize{0.01} & \scriptsize{0.03}  & \scriptsize{0.1} &
                                              \scriptsize{0.01} & \scriptsize{0.03}  & \scriptsize{0.1} &
                                              \scriptsize{0.01} & \scriptsize{0.03}  & \scriptsize{0.1} \\
  \midrule
  	% Columns: k_2_4_8, k_16_32, k_64_128,  
 \Partitioner{\Partitioner{KaHyPar-BP-K}} & 0.0 & 0.0 & 0.0 & 0.0 & 0.0 & 0.0 & 0.0 & 0.0 & 0.0 & 0.0 & 0.0 & 0.0 \tabularnewline
 \Partitioner{\Partitioner{KaHyPar-BP-R}} & 0.0 & 0.0 & 0.0 & 0.0 & 0.0 & 0.0 & 0.0 & 0.0 & 0.0 & 0.0 & 0.0 & 0.0 \tabularnewline
 \Partitioner{\Partitioner{KaHyPar-K}} & 0.7 & 0.0 & 0.0 & 5.0 & 6.0 & 2.0 & 11.1 & 9.1 & 4.0 & 4.9 & 4.3 & 1.7 \tabularnewline
 \Partitioner{\Partitioner{KaHyPar-R}} & 2.0 & 0.7 & 0.7 & 11.0 & 9.0 & 7.0 & 21.0 & 18.0 & 13.0 & 10.0 & 8.0 & 6.0 \tabularnewline
 \Partitioner{\Partitioner{hMetis-K}} & 12.0 & 2.0 & 0.0 & 53.0 & 21.0 & 11.0 & 76.0 & 57.0 & 14.0 & 42.0 & 23.1 & 7.1 \tabularnewline
 \Partitioner{\Partitioner{hMetis-R}} & 2.7 & 2.0 & 0.0 & 18.0 & 14.0 & 7.0 & 34.0 & 27.0 & 17.0 & 16.0 & 12.6 & 6.9 \tabularnewline
 \Partitioner{\Partitioner{PaToH-Q}} & 15.3 & 2.7 & 0.7 & 28.0 & 11.0 & 5.0 & 34.0 & 22.0 & 11.0 & 24.3 & 10.6 & 4.9 \tabularnewline
 \Partitioner{\Partitioner{PaToH-D}} & 9.3 & 2.7 & 0.7 & 18.0 & 11.0 & 4.0 & 32.0 & 22.0 & 10.0 & 18.3 & 10.6 & 4.3 \tabularnewline

  \end{tabular}
\end{table}

\section{Prepacking Algorithm Statistics for \Partitioner{KaHyPar-BP-K}}
\label{sec:prepacking_stats_kahypar_bp_k}

\begin{table}[H]
	\centering
	\caption{  Occurrence of prepacked vertices (i.e., vertices that are fixed to a specific block during partitioning) for each combination of $k$ and $\varepsilon$ when using \Partitioner{KaHyPar-BP-K} on \InstanceType{RealWorld} instances:
  Minimum/average/maximum percentage of prepacked vertices (left), and percentage of instances for which the prepacking is executed at least once (right).}
    \label{table:prepacking_stats_kahypar_bp_k}
	\begin{tabular}{r|rrrrrrrrr||rrr}
		&  \multicolumn{3}{c}{$\varepsilon = 0.01$} & \multicolumn{3}{c}{$\varepsilon = 0.03$} & \multicolumn{3}{c||}{$\varepsilon = 0.1$} & \multicolumn{3}{c}{Prepacking Triggered [\%]} \\
		$k$  &   \scriptsize{Min} & \scriptsize{Avg} & \scriptsize{Max} &
		\scriptsize{Min} & \scriptsize{Avg} & \scriptsize{Max} &
		\scriptsize{Min} & \scriptsize{Avg} & \scriptsize{Max} &
		\scriptsize{$\varepsilon = 0.01$} & \scriptsize{$\varepsilon = 0.03$} & \scriptsize{$\varepsilon = 0.1$} \\
		\midrule
		2 & - & - & - & - & - & - & - & - & - & - & - & - \tabularnewline
4 & - & - & - & - & - & - & - & - & - & - & - & - \tabularnewline
8 & 6.7 & 17.1 & 41.6 & 0.4 & 0.5 & 0.6 & 2.3 & 2.3 & 2.3 & 5.0 & 3.3 & 1.7 \tabularnewline
16 & 3.1 & 15.6 & 34.0 & 0.2 & 2.0 & 7.2 & 1.9 & 2.1 & 2.3 & 8.3 & 6.7 & 3.3 \tabularnewline
32 & 0.3 & 29.9 & 56.0 & 0.1 & 11.7 & 42.3 & 0.2 & 3.4 & 26.3 & 13.3 & 15.0 & 6.7 \tabularnewline
64 & 0.2 & 54.4 & 94.3 & 0.3 & 23.0 & 69.3 & 0.4 & 6.6 & 94.7 & 21.7 & 10.0 & 8.3 \tabularnewline
128 & 0.5 & 76.5 & 100.0 & 0.4 & 42.4 & 91.0 & 0.3 & 15.7 & 59.8 & 28.3 & 21.7 & 11.7 \tabularnewline

	\end{tabular}
\end{table}

\clearpage

\section{Quality Comparison for $\varepsilon = 0.03$ and $\varepsilon = 0.1$}
\label{sec:quality_comparison}

\begin{figure*}[!htb]
	\centering
	\begin{minipage}{.49\textwidth}
		\externalizedfigure{6}{plots/realworld_e0.03}
	\end{minipage} %
	\begin{minipage}{.49\textwidth}
		\externalizedfigure{7}{plots/realworld_e0.1}
	\end{minipage} %
	\vfill
	\vspace{-0.4cm}
	\begin{minipage}{\textwidth}
		\externalizedfigure{8}{plots/legend}
	\end{minipage} %
	\vspace{-0.4cm}
	\caption{Comparing the solution quality of each evaluated partitioner for $\varepsilon = 0.03$ (left) and
 	         $\varepsilon = 0.1$ (right) on our \InstanceType{RealWorld} benchmark set.
 	         Note, \ClockLogo~marks instances that exceeded the time limit.}
	\label{fig:quality_epsilon_real_world}
\end{figure*}

\begin{figure*}[!htb]
	\centering
	\begin{minipage}{.49\textwidth}
		\externalizedfigure{9}{plots/artificial_e0.03}
	\end{minipage} %
	\begin{minipage}{.49\textwidth}
		\externalizedfigure{10}{plots/artificial_e0.1}
	\end{minipage} %
	\vfill
	\vspace{-0.4cm}
	\begin{minipage}{\textwidth}
		\externalizedfigure{11}{plots/legend}
	\end{minipage} %
	\vspace{-0.25cm}
	\caption{Comparing the solution quality of each evaluated partitioner for $\varepsilon = 0.03$ (left) and
             $\varepsilon = 0.1$ (right) on our \InstanceType{Artificial} benchmark set.
             Note, \ClockLogo~marks instances that exceeded the time limit.}
	\label{fig:quality_epsilon_artificial}
\end{figure*}

\clearpage

\section{Absolute Running Times}
\label{sec:running_time}

\begin{figure*}[!ht]
	\externalizedfigure{4}{plots/running_time_realworld}
	\vspace{-0.5cm}
  \caption{Comparing the running time of each evaluated partitioner for different values of $\varepsilon$ on our \InstanceType{RealWorld} benchmark set.
           The number under each boxplot denotes the average running time of the corresponding partitioner.
           Note, \ClockLogo~marks instances that exceeded the time limit.}
  \label{fig:running_time_real_world}
\end{figure*}

\begin{figure*}[!ht]
	\centering
	\externalizedfigure{5}{plots/running_time_artificial}
	\vspace{-0.5cm}
  \caption{Comparing the running time of each evaluated partitioner for different values of $\varepsilon$ on our \InstanceType{Artificial} benchmark set.
           The number under each boxplot denotes the average running time of the corresponding partitioner.
           Note, \ClockLogo~marks instances that exceeded the time limit.}
  \label{fig:running_time_artificial}
\end{figure*}

\end{document}